\newtheorem{lemma}{Lemma}
\newtheorem{corollary}[lemma]{Corollary}
\newtheorem{definition}{Definition}
\newtheorem{proposition}{Proposition}
\newtheorem{theorem}{Theorem}
\newif\if@borderstar
\def\bordermatrix{\@ifnextchar*{%
 \@borderstartrue\@bordermatrix@i}{\@borderstarfalse\@bordermatrix@i*}%
}
\def\@bordermatrix@i*{\@ifnextchar[{\@bordermatrix@ii}{\@bordermatrix@ii[()]}}
\def\@bordermatrix@ii[#1]#2{%
\begingroup
 \m@th\@tempdima8.75\p@\setbox\z@\vbox{%
 \def\cr{\crcr\noalign{\kern 2\p@\global\let\cr\endline }}%
 \ialign {$##$\hfil\kern 2\p@\kern\@tempdima & \thinspace %
  \hfil $##$\hfil && \quad\hfil $##$\hfil\crcr\omit\strut %
  \hfil\crcr\noalign{\kern -\baselineskip}#2\crcr\omit %
  \strut\cr}}%
 \setbox\tw@\vbox{\unvcopy\z@\global\setbox\@ne\lastbox}%
 \setbox\tw@\hbox{\unhbox\@ne\unskip\global\setbox\@ne\lastbox}%
 \setbox\tw@\hbox{%
  $\kern\wd\@ne\kern -\@tempdima\left\@firstoftwo#1%
  \if@borderstar\kern 2pt\else\kern -\wd\@ne\fi%
 \global\setbox\@ne\vbox{\box\@ne\if@borderstar\else\kern 2\p@\fi}%
 \vcenter{\if@borderstar\else\kern -\ht\@ne\fi%
  \unvbox\z@\kern -\if@borderstar2\fi\baselineskip}%
 \if@borderstar\kern-2\@tempdima\kern2\p@\else\,\fi\right\@secondoftwo#1 $%
 }\null \;\vbox{\kern\ht\@ne\box\tw@}%
\endgroup
}
\title{Characterization of SINR Region for Multiple Interfering Multicast in Power-Controlled Systems}
\author{Yi Chen and Chi Wan Sung \thanks{Yi Chen is with Department of
    Information Engineering, The Chinese University of Hong
    Kong. Email: chelseachenyi@gmail.com. 
  Chi Wan Sung is with Department of Electronic Engineering,
  City University of Hong Kong. Email: albert.sung@cityu.edu.hk. This work was supported in part by a grant from the University Grants Committee of the Hong Kong Special Administrative Region, China under Project AoE/E-02/08, and in part by a grant from the Research Grants Council of the Hong Kong Special Administrative Region, China under Project CityU 121713.}}
\begin{document} 
 \maketitle
 \pagestyle{empty}
\thispagestyle{empty}
\begin{abstract}
This paper considers a wireless communication network consisting of
multiple interfering multicast sessions. Different from a unicast
system where each transmitter has only one receiver, in a multicast
system, each transmitter has multiple receivers. It is a well known result for wireless unicast systems that the feasibility of an
signal-to-interference-plus-noise power ratio (SINR) without power constraint is decided by the
Perron-Frobenius eigenvalue of a nonnegative matrix. We generalize
this result and propose necessary and sufficient conditions for the 
feasibility of an SINR in a wireless multicast system with and without
power constraint. The
feasible SINR region as well as its geometric properties are studied. Besides, an iterative algorithm is proposed which can
efficiently check the feasibility condition and compute the boundary points of the feasible SINR region.
\end{abstract}


\begin{keywords}
Wireless multicast system, power control,
  signal-to-interference-plus-noise power ratio (SINR), SINR
  feasibility, SINR region, Perron-Frobenius eigenvalue.
\end{keywords}


\section{Introduction}
In wireless communication systems, interference is an inherent
phenomenon. Due to the broadcast nature of wireless channels,
interference arises whenever multiple transmitter-receiver pairs are
active concurrently in the same frequency band, and each receiver is
only interested in retrieving information from its own
transmitter. For a particular receiver, the received signal is a
superposition of its desired signal, interfering signals and background
noise. SINR, defined
as the power of desired signal divided by the sum of the power of
interfering signals and the power of noise, is a widely used
performance measure for wireless communication systems. It is
analogous to signal-to-noise ratio (SNR) used for single
user communication, which has clearly understood implication on the bit error rate (BER) and capacity
for additive white Gaussian noise (AWGN) channels. 
Using SINR as a surrogate for BER and capacity implicitly assumes that
the
interference is an AWGN. Although there
are limitations of this assumption, as reported in \cite{5910125,6373675}, the importance of
SINR has never been doubted. 

For a system consisting of multiple point-to-point communication
sessions, also referred to as \emph{unicast system}, the SINRs of all receivers form a
vector. The feasible SINR region includes all the SINR vectors that can be
achieved by some transmission powers. 
The geometric properties of feasible SINR region has been
studied in \cite{1010870,4012511,5730591}. Reference
\cite{1010870} proves that in the case of unlimited transmission power, the feasible
SINR region is log-convex.  In \cite{4012511}, it is shown that under
a total power constraint, the infeasible SINR region is not convex. Reference
\cite{5730591} considers a system with only three transmitter-receiver
pairs without power constraint, and
shows that the feasible SINR region is concave. It also provides
certain technical conditions under which a concavity result for
systems with a general number of users is established. In
\cite{5466510}, for the cases that the transmission powers are subject
to arbitrary linear constraints, a mathematical expression for the
boundary points of the SINR region is obtained.

In this paper, we consider the feasible SINR region for systems
consisting of multiple point-to-multipoint communication sessions,
also referred to as \emph{multicast system}.  
Multicast enables data to be delivered from a source node to multiple
destination nodes. Practical examples of such configurations include
cellular networks and two-way relay networks. In cellular networks, a
base station multicasts a file to multiple mobile devices that request
the file at the same time \cite{5285177}. In
two-way relay networks, when network coding is applied, a relay
multicasts the coded packets to two sink nodes \cite{7024873}.
The power control and scheduling for wireless multicast systems have
been studied in \cite{1204269,4606845,9384025,6497024} and the
references therein. All these works aim to either minimize the system
power or maximize the system throughput, subject to the constraint
that the SINR of all receivers are larger than a given threshold. The
feasibilities of the problems, however, are unknown. 

For a wireless unicast system, the feasibility of an SINR vector
without power constraint is determined by the Perron-Frobenius
eigenvalue of a nonnegative matrix \cite{438920,120145}. In this
paper, we generalize this result to a wireless multicast system. We
first propose a necessary and sufficient condition under which an SINR
is achievable without power limitation. Based on this
condition, we figure out the feasible SINR region by giving its
boundary points. The approach is to find the
farthest point of the feasible SINR region from the origin in a given
direction. An iterative algorithm is proposed to find the farthest point,
which is also a distributed power control algorithm to solve the power
balancing problem \cite{438920} aimed to maximize the minimal SINR of
all receivers. 

Then we analyse the geometric properties of the feasible and
infeasible SINR regions. It is found that the feasible SINR region of a multicast system is in fact
the intersection of the feasible SINR regions of all its embedded
unicast systems. Based on the results in
\cite{1010870,4012511,5730591} for unicast systems, we
show that the feasible SINR region of a multicast system is
log-convex, and the infeasible SINR region of a multicast system with
two multicast sessions is convex. We also show by an example that, the
convexity property of the infeasible SINR region does not hold for a
general multicast system with more than two multicast sessions.

Later, the necessary and sufficient condition for the feasibility of
an SINR in a multicast system is extended to include linear
constraints on the power. This result generalizes the results in
\cite{5466510} for unicast systems. Besides, in \cite{5466510}, the
zero-outage SINR region for a time-varying system is also considered, where the channel gains are
selected from a finite set. Suppose the transmission powers are not allowed to vary with the
channel gains, we establish a reduction that maps any instance of the
zero-outage SINR problem in a time-varying unicast system to a
corresponding instance of the feasible SINR problem in a
time-invariant multicast system. The idea is to regard the multiple receivers in a multicast session
as an identical receiver that can experience a finite set of channel
gains.

The rest of the paper is organized as follows: In Section
\ref{section:system}, the system model and problem formulation are
presented. The necessary and sufficient condition on the feasibility
of an SINR vector is provided in Section \ref{section:feasiblility}. Section
\ref{section:algorithm} gives the characterizations of the SINR
region and proposes an iterative algorithm. Section
\ref{section:geometric} studies the geometric properties of the
feasible SINR region. Section \ref{section:constraint} extends the
study to include power constraints. Finally, the paper is concluded in
Section \ref{section:conclusion}.

\emph{Notation}: The following notations are used throughout this
paper. Vectors are denoted in bold small letter, e.g., $\mathbf x$, with their
$i$th entry denoted by $x_i$. They are regarded as column vectors
unless stated otherwise. Matrices are denoted by bold capitalized
letters, e.g., $\mathbf X$, with $X_{ij}$ denoting the $(i, j)$th
entry. Vector and matrix inequalities are component-wise inequalities, e.g.,
$\mathbf x \geq \mathbf y$ if $x_i\geq y_i$ for all $i$; $\mathbf
X\geq \mathbf Y$ if $X_{ij}\geq Y_{ij}$ for all $i$ and $j$. The
cardinality of a set is denoted by ``$|\cdot|$". The Euclidean norm of
a vector is
denote by ``$||\cdot||$" . The transpose of a
vector or matrix is denoted by $(\cdot)^{T}$. $\mathbf I$ represents
an identity matrix with compatible size. $\mathbf 0$ represents a
vector with compatible size whose entries are all zero.

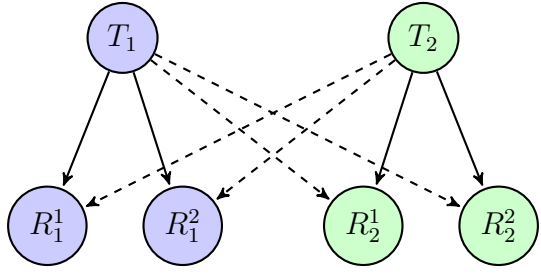
\begin{figure}
\centering
\begin{tikzpicture}[->,>=stealth',shorten >=1pt,node distance=3cm,
  thick, node/.style={circle,minimum size = 13mm,fill=blue!20,draw}]
\node at (1,3)  [circle,fill=blue!20,draw,] (1){$T_1$};
\node at (5,3)  [circle,fill=green!20,draw] (2){$T_2$};
\node at (0,.5)  [circle,fill=blue!20,draw] (3){$R_1^1$};
\node at (1.8,0.5)  [circle,fill=blue!20,draw] (4){$R_1^2$};
\node at (4.2,0.5)  [circle,fill=green!20, draw] (5){$R_2^1$};
\node at (6,0.5)  [circle,fill=green!20, draw] (6){$R_2^2$};
\path
  (1)   edge (3)
      edge (4)
      edge[dashed] (5)
       edge[dashed] (6)
 (2)   edge (5)
      edge (6)
      edge[dashed] (3)
       edge[dashed] (4);
\end{tikzpicture}
\caption{Example of a multicast network consisting of two multicast
  sessions. Transmitter $T_1$ wants to transmit data to both $R_1^1$
  and $R_1^2$. Transmitter $T_2$ wants to transmit data to both $R_2^1$
  and $R_2^2$. Their transmitted signals interfere with each
  other. The solid lines represent intended links and the dashed lines
represent interfering links.}
\label{fig:example}
\end{figure}

\section{System Model}\label{section:system}
Consider a general wireless communication network consisting of $N$ multicast sessions. The $N$
transmitters are denoted by $T_i$ for $i=1,\ldots, N$. Each $T_i$
wants to multicast common data packets to $K_i$ receivers, denoted by $R_{i}^{k_i}$ for
$k_i=1,\ldots,K_i$. Without loss of generality assume $K_i\geq 1$. If
$K_i = 1$ for all $i$, the scenario reduces to the unicast case. The total number
of receivers in the system is $K = \sum_i^N K_i$. Define $\mathcal K_i
= \{1,2,\ldots,K_i\}$ for $i = 1,\dots,N$. Fig. \ref{fig:example}
illustrates an example of such network. Let $p_i$ be the transmission power of transmitter $T_i$ and $\mathbf p = [p_1,\ldots,p_N]^T$. The channel gain between $T_j$ and $R_{i}^{k_i}$ is denoted by $g_{r_i^{k_i},t_j}$. 
All the multicast sessions share the same channel and thus interfere with each other. We assume that interference caused by simultaneous transmissions is treated as AWGN with variance identical to the received power. 
The SINR of receiver $R_i^{k_i}$ is given by
\begin{equation}
\gamma_i^{k_i}(\mathbf p) = \frac{g_{r_i^{k_i},t_i} p_i}{\sum_{j\neq i}g_{r_i^{k_i},t_j} p_j + \sigma^2},
\end{equation}
where $\sigma^2$ is the variance of the background noise and without
loss of generality, it is
assumed to be identical for all receivers. We define the SINR of the $i$-th multicast session as
\begin{equation*}
\gamma_i (\mathbf p)= \min_{k_i\in \mathcal K_i} \left\{\gamma_i^{k_i}(\mathbf p)\right\}.
\end{equation*}
The SINR vector of the system is 
\begin{equation*}
 \Gamma (\mathbf p)= [\gamma_1(\mathbf p),\gamma_2(\mathbf
 p),\ldots,\gamma_N(\mathbf p)].
 \end{equation*}
In this paper, we analyze the feasible SINR region of a multicast system, that is, 
\begin{equation*}
\Upsilon = \left\{\Gamma(\mathbf p) \in \mathbb R^N: \mathbf p\geq
  \mathbf 0, \mathbf p \in \mathbb R^N\right\}.
\end{equation*}

\begin{proposition}\label{pro:aofis}
Given a vector $\boldsymbol\mu\in \mathbb R^N$. There exists a power vector
$\mathbf p^*\geq \mathbf 0$ such that $\Gamma(\mathbf p^*) = \boldsymbol\mu$,
if and only if there exits a power vector $\mathbf p'\geq \mathbf 0$
such that $\Gamma(\mathbf p') \geq \boldsymbol\mu$.
\end{proposition}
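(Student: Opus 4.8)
The plan is to prove the two implications separately. The forward direction is immediate: if $\Gamma(\mathbf p^*)=\boldsymbol\mu$ then trivially $\Gamma(\mathbf p^*)\geq\boldsymbol\mu$, so $\mathbf p'=\mathbf p^*$ works. All the content is in the converse, where from a power vector $\mathbf p'$ that \emph{over}-achieves the target I must produce one that meets it \emph{exactly} in every coordinate.

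First I would recast both conditions as statements about a single vector-valued map. Assuming $\boldsymbol\mu\geq\mathbf 0$ (as SINR targets must be), define for each session
\begin{equation*}
I_i(\mathbf p)=\max_{k_i\in\mathcal K_i}\frac{\mu_i}{g_{r_i^{k_i},t_i}}\left(\sum_{j\neq i}g_{r_i^{k_i},t_j}\,p_j+\sigma^2\right),
\end{equation*}
and set $\mathbf I(\mathbf p)=[I_1(\mathbf p),\ldots,I_N(\mathbf p)]^T$. Unwinding the definition of $\gamma_i^{k_i}$, the inequality $\gamma_i^{k_i}(\mathbf p)\geq\mu_i$ is equivalent to $p_i\geq\frac{\mu_i}{g_{r_i^{k_i},t_i}}(\sum_{j\neq i}g_{r_i^{k_i},t_j}p_j+\sigma^2)$, so taking the minimum over $k_i$ on the SINR side corresponds to taking the maximum over $k_i$ on the power side. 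This gives the two key equivalences
\begin{equation*}
\gamma_i(\mathbf p)\geq\mu_i\iff p_i\geq I_i(\mathbf p),\qquad
\gamma_i(\mathbf p)=\mu_i\iff p_i=I_i(\mathbf p),
\end{equation*}
the second obtained by combining the first with the symmetric fact that $\gamma_i(\mathbf p)\leq\mu_i$ holds iff $p_i\leq I_i(\mathbf p)$ (since ``$p_i\leq c_{k_i}$ for some $k_i$'' is the same as ``$p_i\leq\max_{k_i}c_{k_i}$''). Thus the hypothesis $\Gamma(\mathbf p')\geq\boldsymbol\mu$ says $\mathbf p'$ is a super-solution, $\mathbf p'\geq\mathbf I(\mathbf p')$, while the goal $\Gamma(\mathbf p^*)=\boldsymbol\mu$ says a fixed point $\mathbf p^*=\mathbf I(\mathbf p^*)$ exists; the task becomes producing a fixed point from a super-solution.

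Next I would record the structural properties of $\mathbf I$ and run a monotone iteration. Each $I_i$ is a maximum of finitely many affine functions of $\mathbf p$, hence continuous, and it is monotone: since $I_i$ depends on $\mathbf p$ only through the nonnegative interference sums $\sum_{j\neq i}g_{r_i^{k_i},t_j}p_j$, we have $\mathbf p\geq\tilde{\mathbf p}\Rightarrow\mathbf I(\mathbf p)\geq\mathbf I(\tilde{\mathbf p})$. With these in hand, set $\mathbf p^{(0)}=\mathbf p'$ and $\mathbf p^{(t+1)}=\mathbf I(\mathbf p^{(t)})$. The super-solution hypothesis gives $\mathbf p^{(1)}=\mathbf I(\mathbf p')\leq\mathbf p'=\mathbf p^{(0)}$, and monotonicity propagates this to $\mathbf p^{(t+1)}\leq\mathbf p^{(t)}$ for all $t$. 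The sequence is therefore componentwise nonincreasing and bounded below by $\mathbf 0$ (each $I_i\geq 0$), so it converges to some $\mathbf p^*\geq\mathbf 0$; passing to the limit in the recursion and using continuity of $\mathbf I$ yields $\mathbf p^*=\mathbf I(\mathbf p^*)$. By the second equivalence above, $\Gamma(\mathbf p^*)=\boldsymbol\mu$, which closes the converse.

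The step I expect to require the most care is the passage from the unicast intuition to the multicast setting. In the unicast case $\mathbf I$ is affine and feasibility can be read off a spectral radius, whereas here the per-session minimum turns each $I_i$ into a piecewise-linear max-of-affine map. The monotone-iteration argument is robust to this, so the real work is verifying that the two facts it actually uses survive the maximum: monotonicity and continuity (both clear) and, more delicately, the equivalence $\gamma_i(\mathbf p)=\mu_i\iff p_i=I_i(\mathbf p)$ with the maximum rather than a single constraint. A minor point worth flagging is the standing assumption $\boldsymbol\mu\geq\mathbf 0$: for a coordinate with $\mu_i<0$ the inequality hypothesis is vacuous while exact attainment is impossible, so nonnegativity of the target is genuinely needed for the equivalence.
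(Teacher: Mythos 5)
Your proof is correct, and it shares the paper's basic strategy---start from the over-achieving power vector, drive powers monotonically downward, and invoke continuity at the limit---but the mechanics are genuinely different. The paper adjusts one coordinate at a time: it repeatedly picks a transmitter whose SINR strictly exceeds its target and lowers that single power so the target is met exactly, producing componentwise decreasing sequences whose limit is argued to attain $\boldsymbol\mu$. You instead package all constraints into the max-of-affine interference map $\mathbf I$, note that exact attainment is precisely the fixed-point condition $\mathbf p = \mathbf I(\mathbf p)$ while over-achievement is the super-solution condition $\mathbf p \geq \mathbf I(\mathbf p)$, and run the synchronous iteration $\mathbf p^{(t+1)} = \mathbf I(\mathbf p^{(t)})$ (a Yates-style standard-interference-function argument, Jacobi rather than Gauss--Seidel). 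Your formulation buys two things. First, rigor at the limit: the paper's last step needs $\gamma_i(\mathbf p^{(t')}) = \mu_i$ to hold for some $t' \geq T$ \emph{for every} $i$, which silently assumes each over-achieving transmitter is revisited infinitely often---a fairness condition on the unspecified selection order; in your scheme every coordinate is updated at every step, so the fixed-point identity at the limit is a one-line consequence of continuity of $\mathbf I$, with no scheduling caveat. Second, you make explicit the hypothesis $\boldsymbol\mu \geq \mathbf 0$, without which the proposition as literally stated for $\boldsymbol\mu \in \mathbb R^N$ is false (SINRs are nonnegative, so a negative target can be dominated but never attained); the paper leaves this implicit. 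What the paper's sequential argument offers in exchange is mainly the transparent comparative-statics intuition (lowering $p_i$ lowers $\gamma_i$ and raises every $\gamma_j$, $j \neq i$), which your argument absorbs into the monotonicity of $\mathbf I$.
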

\begin{proof}
The ``only if'' part is trivial and we show the ``if'' part. Suppose
$\gamma_i(\mathbf p')>\mu_i$ for some $i$.
Fix such an $i$. Since $\gamma_i(\mathbf p)$ is monotonically
decreasing as $p_i$ is decreasing, we can find a $0<p_i^{(1)}<p_i'$ and
let $\mathbf p^{(1)} = [p_1',\ldots,p_{i-1}', p_i^{(1)},p_{i+1}',\ldots,p_N']^T$ such that
$\gamma_i(\mathbf p^{(1)})=\mu_i$. On the other hand, since
$\gamma_j(\mathbf p)$ for $j\neq i$ is monotonically
increasing as $p_i$ is decreasing, we have $\gamma_j(\mathbf
p^{(1)})\geq \mu_j$. By keeping on decreasing the power of
transmitters
that achieve higer SINR than $\boldsymbol\mu$, we obtain a sequence
$p_i^{(1)},p_i^{(2)},\ldots,p_i^{(t)},\ldots$ for each $i=1,\ldots,N$. It can be
seen that these sequences are monotonically decreasing and lower
bounded by zero, so they are convergent. Denote the limit point by
$\mathbf p^*$. For any arbitrarily small $\delta>0$ and for all $i$,
since $\gamma_i(\mathbf p)$ is continuous with $\mathbf p$, there
exists a sufficiently large $T$, when $t>T$, $|\gamma_i(\mathbf p^*) -
\gamma_i(\mathbf p^{(t)})| <\delta$. Meanwhile, since
$\gamma_i(\mathbf p^{(t')})=\mu_i$ for some $t'\geq T$, we have
$|\gamma_i(\mathbf p^*) - \mu_i| <\delta$. Therefore $\gamma_i(\mathbf
p^*) =\mu_i$ for all $i$.
\end{proof}

By Proposition \ref{pro:aofis}, we say that an SINR vector $\boldsymbol\mu = [\mu_1,\ldots,\mu_N]$ is \emph{feasible} if and only if there exists a power vector $\mathbf p\geq \mathbf 0$ such that
\begin{equation}\label{equ:sdiod}
p_i -   \sum_{j\neq i} \mu_i
\frac{g_{r_i^{k_i},t_j}}{g_{r_i^{k_i},t_i}}p_j \geq \mu_i \frac{\sigma^2}{g_{r_i^{k_i},t_i}}, \forall k_i\in \mathcal K_i,\forall i.
\end{equation}
In matrix form, it is
\begin{equation}\label{equ:kfkk}
\mathbf A(\boldsymbol\mu) \mathbf p\geq \mathbf n(\boldsymbol\mu),
\end{equation} 
where
\begin{align*}
\mathbf A(\boldsymbol\mu) = \left[
\begin{smallmatrix}
1 & -\mu_1\frac{g_{r_1^{1},t_2}}{g_{r_1^{1},t_1}} & \cdots &  -\mu_1\frac{g_{r_1^{1},t_N}}{g_{r_1^{1},t_1}}\\
1 & -\mu_1\frac{g_{r_1^{2},t_2}}{g_{r_1^{2},t_1}} & \cdots &  -\mu_1\frac{g_{r_1^{2},t_N}}{g_{r_1^{2},t_1}}\\
 \vdots& \vdots& \ddots &\vdots  \\
1 & -\mu_1\frac{g_{r_1^{K_1},t_2}}{g_{r_1^{K_1},t_1}} & \cdots &  -\mu_1\frac{g_{r_1^{K_1},t_N}}{g_{r_1^{K_1},t_1}}\\
 -\mu_2\frac{g_{r_2^{1},t_1}}{g_{r_2^{1},t_2}}&1  & \cdots &  -\mu_2\frac{g_{r_2^{1},t_N}}{g_{r_2^{1},t_2}}\\
 \vdots& \vdots& \ddots &\vdots  \\
 -\mu_2\frac{g_{r_2^{K_2},t_1}}{g_{r_2^{K_2},t_2}}&1  & \cdots &  -\mu_2\frac{g_{r_2^{K_2},t_N}}{g_{r_2^{K_2},t_2}}\\
 \vdots& \vdots& \ddots &\vdots  \\
-\mu_N\frac{g_{r_N^{K_N},t_1}}{g_{r_N^{K_N},t_N}}&-\mu_N\frac{g_{r_N^{K_N},t_2}}{g_{r_N^{K_N},t_N}}  & \cdots &  1
\end{smallmatrix}\right]
=
\begin{bmatrix}
  \mathbf a_{1}^{1}\\
\mathbf a_{1}^{2}\\
\vdots \\
\mathbf a_{1}^{K_1} \\
\mathbf a_{2}^{1}\\
\vdots \\
\mathbf a_{2}^{K_2}\\
\vdots\\
\mathbf a_{N}^{K_N}
\end{bmatrix}
\in \mathbb R^{K\times N} 
\end{align*}
and
\begin{align*}
\mathbf n (\boldsymbol\mu)&=
\Big[\overbrace{\frac{\mu_1 \sigma^2}{g_{r_1^{1},t_1}},\frac{\mu_1 \sigma^2}{g_{r_1^{2},t_1}}\ldots,\frac{\mu_1 \sigma^2}{g_{r_1^{K_1},t_1}}}^{K_1},
\ldots,\overbrace{\frac{\mu_N \sigma^2}{g_{r_N^{1},t_N}},\ldots,\frac{\mu_N \sigma^2}{g_{r_N^{K_N},t_N}}}^{K_N}\Big]^{T}\\
& = [n_1^1,n_1^2\ldots,n_1^{K_1},
\ldots,n_N^{1},\ldots,n_N^{K_N}]^{T}\in \mathbb R^{K\times 1}.
\end{align*}
Each row of $\mathbf A (\boldsymbol\mu)$ corresponds to a receiver. For the
convenience of discussion, we use $\mathbf a_{i}^{k_i} \in
\mathbb R^N$ to denote the row of $\mathbf A (\boldsymbol\mu)$ that corresponds to receiver
$R_{i}^{k_i}$. As in the form \eqref{equ:kfkk}, the feasibility of $\boldsymbol\mu$ can be
checked through linear programming \cite{6554630}. 
However, in a different way, we propose a necessary
and sufficient condition on the feasibility, which generalizes the
Perron-Frobenius eigenvalue criteria for unicast systems (square
matrices). This condition is used to explicitly characterize the
feasible SINR region $\Upsilon$ and to prove some geometric properties of it. Before further discussion, we given some definitions.

Define set
\begin{equation*}
\mathcal G(\boldsymbol\mu) = \Big\{\mathbf G=
\begin{bmatrix}
\mathbf a_1^{k_1}\\
\mathbf a_2^{k_2}\\
\vdots\\
\mathbf a_N^{k_N}
\end{bmatrix}\in \mathbb R^{N\times N}:
k_i \in \mathcal K_i\ \text{for}\ i=1,\dots,N
\Big\}.
\end{equation*}
Notice that, for each $\mathbf G \in \mathcal G(\boldsymbol\mu)$, only one receiver is
involved for each transmitter, which is a unicast scenario. So
$\mathcal G (\boldsymbol\mu)$ is the set including all the embedded unicast systems and its
size is $\prod_{i=1}^{N} K_i$. 
Considering the example in Fig. \ref{fig:example}, the four embedded
unicast systems are
\begin{equation*}
\mathcal G (\boldsymbol\mu) = \left\{ 
\begin{bmatrix}
\mathbf a_1^{1}\\
\mathbf a_2^{1}
\end{bmatrix}, 
\begin{bmatrix}
\mathbf a_1^{2}\\
\mathbf a_2^{1}
\end{bmatrix}, 
\begin{bmatrix}
\mathbf a_1^{1}\\
\mathbf a_2^{2}
\end{bmatrix}, 
\begin{bmatrix}
\mathbf a_1^{2}\\
\mathbf a_2^{2}
\end{bmatrix}
 \right\}.
\end{equation*}
In subsequent discussion, we also use $\mathbf k
= (k_1,k_2,\ldots,k_N)$ to specify a $\mathbf G \in \mathcal
G(\boldsymbol\mu)$. Let $\mathbf n_{\mathbf
  G}=[n_1^{k_1},n_2^{k_2},\ldots,n_N^{k_N}]^{T}$ denote the noise vector with entries of $\mathbf n (\boldsymbol\mu)$ that correspond to the
 receivers in $\mathbf G$. For the simplicity of notation, we sometimes drop the argument
$\boldsymbol\mu$ of $\mathbf A$, $\mathbf n$ and $\mathcal G$ when the context is clear.

\begin{definition}{\textnormal{\cite{92048}}}
A matrix $\mathbf X$ is called \emph{nonnegative} if $\mathbf X\geq
\mathbf 0$. A nonnegative square matrix $\mathbf X$ is \emph{irreducible} if for every pair $(i,j)$ of its index set, there exists a positive integer $n\equiv n(i,j)$ such that $X_{ij}^{(n)}>0$, where $X_{ij}^{(n)}$ is the $(i,j)$th entry of $\mathbf X^n$.
\end{definition}

\begin{definition}{\textnormal{\cite{92048}}}
Let $\mathbf X$ be an irreducible nonnegative square matrix. The \emph{Perron-Frobenius eigenvalue} of $\mathbf X$ is the maximum of the absolute value of eigenvalues of $\mathbf X$, and is denoted by $\lambda(\mathbf X)$.
\end{definition}

Let $\mathbf 1 = [1,\ldots,1]$ be a vector whose components are all
$1$. For each $\mathbf G\in \mathcal G(\mathbf 1)$, $\mathbf I-\mathbf
G$ is the normalized interference link gain matrix of the corresponding embedded
unicast system.
\begin{definition}
A multicast system is called irreducible if and only if the matrices $\mathbf I-\mathbf
G$ for $\mathbf G\in \mathcal G(\mathbf 1)$ are all irreducible.
\end{definition}
It needs to be mentioned that if a multicast system is irreducible, as long as
$\boldsymbol\mu>\mathbf 0$, the matrices $\mathbf I-\mathbf
G$ for $\mathbf G\in \mathcal G(\boldsymbol\mu)$ are all
irreducible. Throughout the paper, we assume that the multicast system is irreducible.

\section{Feasibility Condition for SINR}\label{section:feasiblility}
In this section, we give a necessary and sufficient condition for
the feasibility of an SINR vector in a wireless multicast system. Recall that
in a wireless unicast system, the
following theorem from \cite{438920,281084} is the fundamental results that characterize the feasibility.

\begin{theorem}\textnormal{\cite{281084}}\label{theor:fjow} 
Consider a unicast network setting $\mathbf G$ and assume $\mathbf
I-\mathbf G$ is irreducible. The following statements are equivalent:
\begin{enumerate}
\item There exists a power vector $\mathbf p \geq \mathbf 0$ such that
    $\mathbf G \mathbf p\geq \mathbf 0$.
\item $\lambda( \mathbf I-\mathbf G) < 1$.
\item $\mathbf G^{-1} = \sum_{k=0}^{\infty} ( \mathbf I-\mathbf G)^k$
  exists and is positive component-wise, with $\lim_{k\rightarrow
    \infty} ( \mathbf I-\mathbf G)^k = 0$.
\end{enumerate}
Moreover, there exists $\mathbf p\geq\mathbf 0$ such that $\mathbf G \mathbf p = \mathbf 0$ if and only if $\lambda( \mathbf I-\mathbf G) =1$.
\end{theorem}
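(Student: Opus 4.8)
The plan is to recognize the statement as a direct consequence of Perron--Frobenius theory applied to the nonnegative matrix $\mathbf F=\mathbf I-\mathbf G$, which is irreducible by hypothesis. Throughout I would invoke the standard facts (from \cite{92048}) that for an irreducible nonnegative matrix, $\lambda(\mathbf F)$ equals the spectral radius, is a simple positive eigenvalue admitting strictly positive right and left eigenvectors $\mathbf v>\mathbf 0$ and $\mathbf u>\mathbf 0$ with $\mathbf F\mathbf v=\lambda(\mathbf F)\mathbf v$ and $\mathbf u^{T}\mathbf F=\lambda(\mathbf F)\mathbf u^{T}$, and that $\lambda(\mathbf F)$ is the \emph{only} eigenvalue possessing a nonnegative eigenvector. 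I would then prove the equivalence of the three statements by the cycle $(2)\Rightarrow(3)\Rightarrow(1)\Rightarrow(2)$ and handle the ``moreover'' clause separately.

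For $(2)\Rightarrow(3)$: when $\lambda(\mathbf F)<1$ the spectral radius of $\mathbf F$ is below one, so the Neumann series $\sum_{k=0}^{\infty}\mathbf F^{k}$ converges, equals $(\mathbf I-\mathbf F)^{-1}=\mathbf G^{-1}$, and forces $\mathbf F^{k}\to\mathbf 0$. Nonnegativity of each $\mathbf F^{k}$ together with irreducibility (which guarantees, for every pair $(i,j)$, some $n$ with $(\mathbf F^{n})_{ij}>0$) makes the summed inverse strictly positive entrywise. For $(3)\Rightarrow(1)$: taking $\mathbf p=\mathbf G^{-1}\mathbf 1$ gives $\mathbf p>\mathbf 0$ and $\mathbf G\mathbf p=\mathbf 1>\mathbf 0$, which satisfies statement $(1)$.

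The crux is $(1)\Rightarrow(2)$, and the key device is the left Perron eigenvector. Rewriting statement $(1)$ as $\mathbf p\geq\mathbf F\mathbf p$ with $\mathbf p\geq\mathbf 0$, $\mathbf p\neq\mathbf 0$, I would left-multiply $\mathbf G\mathbf p=(\mathbf I-\mathbf F)\mathbf p$ by $\mathbf u^{T}$ to obtain the scalar identity $\mathbf u^{T}\mathbf G\mathbf p=(1-\lambda(\mathbf F))\,\mathbf u^{T}\mathbf p$. Since $\mathbf u>\mathbf 0$ and $\mathbf p\geq\mathbf 0$ is nonzero, $\mathbf u^{T}\mathbf p>0$; and since $\mathbf G\mathbf p\geq\mathbf 0$ is nonzero, $\mathbf u^{T}\mathbf G\mathbf p>0$. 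Together these force $1-\lambda(\mathbf F)>0$, i.e.\ $\lambda(\mathbf F)<1$. The anticipated obstacle lies exactly here: the identity yields only $\lambda(\mathbf F)\leq1$ unless one can exclude the degenerate case $\mathbf G\mathbf p=\mathbf 0$, so the strict conclusion hinges on the feasibility being nontrivial (the right-hand side being a nonzero nonnegative vector, as it is in the SINR setting where the noise vector is strictly positive). Making this distinction precise is the delicate step.

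Finally, the ``moreover'' clause is what isolates the boundary case and justifies the strictness above. If $\lambda(\mathbf F)=1$, the positive right eigenvector gives $\mathbf F\mathbf v=\mathbf v$, i.e.\ $\mathbf G\mathbf v=\mathbf 0$ with $\mathbf v>\mathbf 0$, proving the existence claim. Conversely, if $\mathbf G\mathbf p=\mathbf 0$ for some $\mathbf p\geq\mathbf 0$, $\mathbf p\neq\mathbf 0$, then $\mathbf F\mathbf p=\mathbf p$, so $1$ is an eigenvalue of $\mathbf F$ with a nonnegative eigenvector; by the uniqueness part of Perron--Frobenius this eigenvalue must be $\lambda(\mathbf F)$, hence $\lambda(\mathbf F)=1$. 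This closes the cycle and simultaneously explains why statement $(1)$ with a nonzero right-hand side corresponds to the strict inequality $\lambda(\mathbf F)<1$ rather than to $\lambda(\mathbf F)\leq1$.
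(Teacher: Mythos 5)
Your proposal is correct, but there is nothing in the paper to compare it against line by line: the paper does not prove Theorem~\ref{theor:fjow} at all, it imports it verbatim from \cite{281084} as a known building block (the paper's own work begins with Theorem~\ref{theor:fjiwo}). So what you have done is supply the proof the paper delegates to a citation, and you have done so by the standard Perron--Frobenius route: the Neumann-series argument for $(2)\Rightarrow(3)$, with irreducibility upgrading nonnegativity of $\sum_{k}(\mathbf I-\mathbf G)^k$ to strict positivity; the choice $\mathbf p=\mathbf G^{-1}\mathbf 1$ for $(3)\Rightarrow(1)$; the left Perron eigenvector identity $\mathbf u^{T}\mathbf G\mathbf p=(1-\lambda(\mathbf I-\mathbf G))\,\mathbf u^{T}\mathbf p$ for $(1)\Rightarrow(2)$; and uniqueness of the eigenvalue admitting a nonnegative eigenvector for the ``moreover'' clause. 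All four steps are sound.

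The issue you flag as ``the delicate step'' is in fact an imprecision of the theorem statement itself, not a gap in your argument, and your diagnosis is exactly right. As literally written, statement (1) is vacuous ($\mathbf p=\mathbf 0$ satisfies it), and even with $\mathbf p\neq\mathbf 0$ imposed it only gives $\lambda(\mathbf I-\mathbf G)\leq 1$, since at $\lambda(\mathbf I-\mathbf G)=1$ the Perron vector $\mathbf v>\mathbf 0$ satisfies $\mathbf G\mathbf v=\mathbf 0\geq\mathbf 0$. The equivalence holds once (1) is read with $\mathbf G\mathbf p\neq\mathbf 0$ (equivalently $\mathbf G\mathbf p>\mathbf 0$), and your left-eigenvector identity delivers precisely this dichotomy: $\lambda\leq 1$ always, with equality if and only if $\mathbf G\mathbf p=\mathbf 0$. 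Notably, this corrected reading is the one the paper actually uses: in the necessity part of Theorem~\ref{theor:fjiwo} the hypothesis is $\mathbf G\mathbf p\geq\mathbf n_{\mathbf G}$ with $\mathbf n_{\mathbf G}$ strictly positive; in Lemma~\ref{Lemma:eodjj} one must take the witness $\mathbf p$ with $\mathbf X\mathbf p>\mathbf 0$ (available from your statement (3)) for the deduction $\lambda(\mathbf I-\mathbf X')<1$ to go through; and in Lemma~\ref{lemma:decreasing} the paper concludes only $\lambda(\beta^{(k)}\mathbf Z)\leq 1$ from $(\mathbf I-\beta^{(k)}\mathbf Z)\mathbf p^{(k)}\geq\mathbf 0$, which is exactly the non-strict conclusion your analysis predicts for the degenerate case. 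If you wanted to polish the write-up, you could restate (1) as ``there exists $\mathbf p\geq\mathbf 0$ with $\mathbf G\mathbf p\geq\mathbf 0$ and $\mathbf G\mathbf p\neq\mathbf 0$'' and note that this loses no generality for the paper's applications; nothing else needs to change.
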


For a multicast system, the main result of this paper is the following theorem.
\begin{theorem}\label{theor:fjiwo}
 Consider a multicast network setting $\mathbf A(\boldsymbol\mu)$ and
 assume it is irreducible, i.e., the matrices $\mathbf I-\mathbf
 G$ for $\mathbf G\in \mathcal G(\boldsymbol\mu)$ are all irreducible. There exists a power vector $\mathbf p \geq \mathbf 0$ such that $\mathbf A(\boldsymbol\mu)
 \mathbf p\geq \mathbf n(\boldsymbol\mu)$ if and only if 
$\max_{\mathbf G\in
   \mathcal G(\boldsymbol\mu)} \{\lambda(\mathbf I-\mathbf G) \}< 1$.
\end{theorem}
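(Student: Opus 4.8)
The plan is to prove the two implications separately, disposing of the forward (``only if'') direction by a restriction argument and concentrating on the reverse (``if'') direction. Throughout I take $\boldsymbol\mu>\mathbf 0$, so that $\mathbf n(\boldsymbol\mu)>\mathbf 0$; this is effectively forced by the standing irreducibility hypothesis, since a zero $\mu_i$ would make the $i$th row of $\mathbf I-\mathbf G$ vanish. For the ``only if'' part, suppose $\mathbf p\geq\mathbf 0$ satisfies $\mathbf A(\boldsymbol\mu)\mathbf p\geq\mathbf n(\boldsymbol\mu)$. Fix any $\mathbf G\in\mathcal G(\boldsymbol\mu)$; its $N$ rows are a subcollection of the rows of $\mathbf A(\boldsymbol\mu)$ and the matching entries of $\mathbf n(\boldsymbol\mu)$ are exactly $\mathbf n_{\mathbf G}$, so the same $\mathbf p$ gives $\mathbf G\mathbf p\geq\mathbf n_{\mathbf G}>\mathbf 0$. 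In particular $\mathbf p\neq\mathbf 0$ and $\mathbf G\mathbf p\geq\mathbf 0$, so Theorem~\ref{theor:fjow} (equivalence of its statements 1 and 2) yields $\lambda(\mathbf I-\mathbf G)<1$. As $\mathcal G(\boldsymbol\mu)$ is finite, this gives $\max_{\mathbf G}\lambda(\mathbf I-\mathbf G)<1$.

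For the ``if'' direction the difficulty is that each embedded system is feasible with its \emph{own} vector $\mathbf p_{\mathbf G}=\mathbf G^{-1}\mathbf n_{\mathbf G}$ (statement~3 of Theorem~\ref{theor:fjow}), whereas we must exhibit a \emph{single} $\mathbf p$ serving all receivers at once. I would first reduce to a homogeneous statement: it suffices to find $\mathbf w>\mathbf 0$ with $\mathbf A(\boldsymbol\mu)\mathbf w>\mathbf 0$ (strict), for then $\mathbf p=s\mathbf w$ satisfies $\mathbf A(\boldsymbol\mu)\mathbf p=s\,\mathbf A(\boldsymbol\mu)\mathbf w\geq\mathbf n(\boldsymbol\mu)$ once $s$ is large enough. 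Writing $F_{ij}^{k_i}=\mu_i g_{r_i^{k_i},t_j}/g_{r_i^{k_i},t_i}$ for $j\neq i$, the row of $\mathbf A(\boldsymbol\mu)$ for receiver $R_i^{k_i}$ evaluated at $\mathbf w$ equals $w_i-\sum_{j\neq i}F_{ij}^{k_i}w_j$, so $\mathbf A(\boldsymbol\mu)\mathbf w>\mathbf 0$ is equivalent to
\[
\max_{k_i\in\mathcal K_i}\ \sum_{j\neq i}F_{ij}^{k_i}w_j\ <\ w_i\quad\text{for all }i.
\]
Hence I seek a positive strict super-solution of the order-preserving, degree-one homogeneous map $\mathcal T(\mathbf w)_i:=\max_{k_i}\sum_{j\neq i}F_{ij}^{k_i}w_j$.

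The crucial step is to produce such a $\mathbf w$ from a nonlinear Perron--Frobenius eigenvector of $\mathcal T$ and then to pin down its eigenvalue. Since $\mathcal T$ is continuous, monotone and positively homogeneous, and (by irreducibility) $\mathcal T(\mathbf w)\neq\mathbf 0$ whenever $\mathbf w\geq\mathbf 0$ is nonzero, the map $\mathbf w\mapsto\mathcal T(\mathbf w)/\sum_i\mathcal T(\mathbf w)_i$ is a continuous self-map of the standard simplex, so Brouwer's theorem furnishes a nonzero $\mathbf v\geq\mathbf 0$ and $\mu>0$ with $\mathcal T(\mathbf v)=\mu\mathbf v$, and irreducibility upgrades this to $\mathbf v>\mathbf 0$. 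Now the key identification: for each $i$ let $k_i^\star$ attain the maximum defining $\mathcal T(\mathbf v)_i$, and let $\mathbf G^\star\in\mathcal G(\boldsymbol\mu)$ be the resulting selection. Then $(\mathbf I-\mathbf G^\star)\mathbf v=\mathcal T(\mathbf v)=\mu\mathbf v$, so $\mathbf v$ is a positive eigenvector of the nonnegative irreducible matrix $\mathbf I-\mathbf G^\star$; by the Perron--Frobenius theorem its eigenvalue must be the spectral radius, i.e.\ $\mu=\lambda(\mathbf I-\mathbf G^\star)\leq\max_{\mathbf G}\lambda(\mathbf I-\mathbf G)<1$. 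Consequently, for every receiver $R_i^{k_i}$,
\[
v_i-\sum_{j\neq i}F_{ij}^{k_i}v_j\ \geq\ v_i-\mathcal T(\mathbf v)_i\ =\ (1-\mu)v_i\ >\ 0,
\]
so $\mathbf A(\boldsymbol\mu)\mathbf v>\mathbf 0$, and the scaling argument above completes the proof.

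The main obstacle is precisely this identification of the ``combined'' eigenvalue with the maximum of the individual Perron--Frobenius eigenvalues. One cannot simply run the natural monotone update $p_i^{(t+1)}=\max_{k_i}\bigl[\sum_{j\neq i}F_{ij}^{k_i}p_j^{(t)}+\mu_i\sigma^2/g_{r_i^{k_i},t_i}\bigr]$ from $\mathbf 0$ and bound it term by term, because the maximizing selection changes along the iteration and the product $\prod_t(\mathbf I-\mathbf G^{(t)})$ need not contract even though every factor has spectral radius below one (the joint spectral radius may reach one). The clean route is therefore to secure a single fixed direction $\mathbf v$ first and to exploit that at $\mathbf v$ the maximum is attained by an honest member $\mathbf G^\star$ of $\mathcal G(\boldsymbol\mu)$, so that ordinary Perron--Frobenius applies to one matrix. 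This $\mathbf v$ is exactly the max--min-SINR (power-balancing) vector that the paper's later algorithm computes, and it simultaneously serves as the bounding super-solution guaranteeing convergence of the monotone iteration, which is what ties this existence proof to the constructive algorithm.
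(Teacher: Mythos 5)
Your proof is correct, and it takes a genuinely different route from the paper's. The paper proves sufficiency combinatorially: it introduces, for each receiver, the convex set $\mathcal A_i^{k_i}$ of nonnegative powers meeting that receiver's constraint, proves via a block-inversion computation (Lemma~\ref{lemma:nncnc}) that for two selections differing in a single row, one of the two exact solutions $\hat{\mathbf G}^{-1}\mathbf n_{\hat{\mathbf G}}$, $\tilde{\mathbf G}^{-1}\mathbf n_{\tilde{\mathbf G}}$ satisfies both systems simultaneously, and then combines this with Lemma~\ref{Lemma:eodjj} and an induction on $N$ to show that every $N+1$ of the $K$ sets intersect, finishing with Helly's theorem. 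You instead reduce to the strict homogeneous problem and run a nonlinear Perron--Frobenius argument: Brouwer's theorem on the simplex yields $\mathcal T(\mathbf v)=\mu\mathbf v$ with $\mathbf v>\mathbf 0$, and since the componentwise maxima at $\mathbf v$ are attained by an honest selection $\mathbf G^\star\in\mathcal G(\boldsymbol\mu)$, classical Perron--Frobenius theory (pairing $\mathbf v$ against the left Perron eigenvector of $\mathbf I-\mathbf G^\star$) forces $\mu=\lambda(\mathbf I-\mathbf G^\star)<1$, whence $\mathbf A(\boldsymbol\mu)\mathbf v\geq(1-\mu)\mathbf v>\mathbf 0$ and scaling absorbs the noise term. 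The details you leave implicit do hold under the standing irreducibility hypothesis: $\mathcal T(\mathbf w)\neq\mathbf 0$ for nonzero $\mathbf w\geq\mathbf 0$ because no irreducible selection matrix $\mathbf I-\mathbf G$ has a zero column, and the eigenvector must be strictly positive because otherwise the block of every $\mathbf I-\mathbf G$ mapping the support of $\mathbf v$ into its complement would vanish, making those matrices reducible. Comparing the two approaches: the paper's uses only Theorem~\ref{theor:fjow} plus elementary convexity, at the price of the induction and the delicate sign computation in Lemma~\ref{lemma:nncnc}; yours swaps Helly for Brouwer, is shorter, and delivers more as a byproduct---since $(\mathbf I-\mathbf G)\mathbf v\leq\mu\mathbf v$ for every $\mathbf G$, the Subinvariance Theorem gives $\lambda(\mathbf I-\mathbf G)\leq\mu$, so in fact $\mu=\max_{\mathbf G}\lambda(\mathbf I-\mathbf G)$ and $\mathbf v$ is exactly the power-balancing direction, recovering the boundary formula $\beta^*(\boldsymbol\mu)=1/\max_{\mathbf G\in\mathcal G(\boldsymbol\mu)}\lambda(\mathbf I-\mathbf G)$ of Section~\ref{section:algorithm} without the primitivity assumption that Theorem~\ref{theorem:loveh} needs (though, unlike Algorithm~\ref{iter}, this is pure existence, not a convergent scheme). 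Your closing caveat about why the naive adaptive iteration cannot be bounded factor-by-factor is also well taken.
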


Theorem \ref{theor:fjiwo} basically says that for a wireless multcast system, an SINR vector $\boldsymbol \mu$  is
feasible if and only if $\boldsymbol \mu$ is feasible to any of its embedded
unicast system. 

\begin{corollary}\label{corol}
When there are only two multcast sessions, i.e., $N=2$, the feasibility of $\boldsymbol\mu$ is determined by the unicast system specified by
\begin{equation*}
\mathbf G^* = \begin{bmatrix}
\mathbf a_1^{k_1^*}\\ \mathbf a_2^{k_2^*} \end{bmatrix}\ \text{where}\ k_i^* =\arg \max_{k_i\in \mathcal K_i}\{ \mu_i\frac{g_{r_i^{k_i},t_j}}{g_{r_i^{k_i},t_i}} \}, i=1,2, j\neq i.
\end{equation*} That is,
$\boldsymbol\mu$ is feasible if and only if $\lambda(\mathbf I-\mathbf G^*) < 1$.
\end{corollary}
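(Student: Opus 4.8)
The plan is to derive the corollary directly from Theorem \ref{theor:fjiwo}, which asserts that $\boldsymbol\mu$ is feasible if and only if $\max_{\mathbf G\in\mathcal G(\boldsymbol\mu)}\lambda(\mathbf I-\mathbf G)<1$. Hence it suffices to show that the matrix $\mathbf G^*$ defined in the statement attains this maximum, i.e.,
\begin{equation*}
\max_{\mathbf G\in\mathcal G(\boldsymbol\mu)}\lambda(\mathbf I-\mathbf G)=\lambda(\mathbf I-\mathbf G^*).
\end{equation*}
Once this identity is in hand, the claimed equivalence $\boldsymbol\mu$ feasible $\Leftrightarrow \lambda(\mathbf I-\mathbf G^*)<1$ is immediate.

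First I would write out $\mathbf I-\mathbf G$ for a generic $\mathbf G\in\mathcal G(\boldsymbol\mu)$ with rows $\mathbf a_1^{k_1}$ and $\mathbf a_2^{k_2}$. Since $N=2$, this is a $2\times2$ nonnegative matrix with zero diagonal,
\begin{equation*}
\mathbf I-\mathbf G=\begin{bmatrix} 0 & b_1(k_1)\\ b_2(k_2) & 0\end{bmatrix},\qquad b_i(k_i)=\mu_i\frac{g_{r_i^{k_i},t_j}}{g_{r_i^{k_i},t_i}},\ j\neq i.
\end{equation*}
Its characteristic polynomial is $\lambda^2-b_1(k_1)b_2(k_2)$, so the Perron-Frobenius eigenvalue is the geometric mean $\lambda(\mathbf I-\mathbf G)=\sqrt{b_1(k_1)\,b_2(k_2)}$. (Irreducibility together with $\boldsymbol\mu>\mathbf 0$ guarantees $b_1,b_2>0$, so this is the genuine dominant eigenvalue.)

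The key observation is that this expression is \emph{separable}: the entry $b_1$ depends only on the receiver index $k_1$ of session~$1$, and $b_2$ only on $k_2$ of session~$2$. Because $\sqrt{b_1 b_2}$ is nondecreasing in each of the nonnegative arguments $b_1,b_2$, the joint maximization over $(k_1,k_2)\in\mathcal K_1\times\mathcal K_2$ decouples into two independent one-dimensional problems $\max_{k_1\in\mathcal K_1}b_1(k_1)$ and $\max_{k_2\in\mathcal K_2}b_2(k_2)$. Their maximizers are exactly the indices $k_1^*$ and $k_2^*$ given in the statement, so $\mathbf G^*$ attains the maximum, establishing the displayed identity and hence the corollary.

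There is no serious obstacle here; the argument rests entirely on the closed-form, product-type expression for the Perron-Frobenius eigenvalue of a $2\times2$ hollow nonnegative matrix together with its coordinatewise monotonicity. It is worth emphasizing that this decoupling is special to $N=2$: for $N>2$ the eigenvalue $\lambda(\mathbf I-\mathbf G)$ no longer factors into a product of per-session terms, so the worst-case receiver indices can no longer be selected independently session by session, which is precisely why the clean selection rule in the corollary is available only for two multicast sessions.
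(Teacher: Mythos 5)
Your proposal is correct and takes essentially the same route as the paper: the paper likewise invokes Theorem~\ref{theor:fjiwo} and notes that for $N=2$ every embedded unicast matrix satisfies $\lambda(\mathbf I-\mathbf G)=\sqrt{\mu_1\tfrac{g_{r_1^{k_1},t_2}}{g_{r_1^{k_1},t_1}}\cdot \mu_2\tfrac{g_{r_2^{k_2},t_1}}{g_{r_2^{k_2},t_2}}}$, so the maximization decouples and is attained at $\mathbf G^*$. Your write-up simply makes explicit the characteristic-polynomial computation and the monotonicity/separability argument that the paper leaves implicit.
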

Corollary \ref{corol} follows straightforwardly from Theorem \ref{theor:fjiwo}. Note that when $N=2$, for any $\mathbf G\in \mathcal G$, we have
\begin{equation}
\lambda(\mathbf I-\mathbf G) = \sqrt{\mu_1
\frac{g_{r_1^{k_1},t_2}}{g_{r_1^{k_1},t_1}}\times \mu_2
\frac{g_{r_2^{k_2},t_1}}{g_{r_2^{k_2},t_2}}}.
\end{equation}
So $\max_{\mathbf G\in
   \mathcal G} \{\lambda(\mathbf I-\mathbf G) \} =\lambda(\mathbf I-\mathbf G^*)$.

The proof of the necessary condition for Theorem \ref{theor:fjiwo} is straightforward. Suppose there exists
a power vector $\mathbf p \geq \mathbf 0$ such that $\mathbf A\mathbf p\geq
\mathbf n$. Then for any $\mathbf G \in \mathcal G$, we have $\mathbf G\mathbf
p \geq \mathbf n_{\mathbf G}\geq \mathbf 0$. By Theorem \ref{theor:fjow}, $\lambda(\mathbf I-\mathbf G) < 1$ for all $\mathbf G$, which implies $\max_{\mathbf G\in
   \mathcal G} \{\lambda(\mathbf I-\mathbf G) \}< 1$.

In the rest of this section, we prove the sufficient condition and assume that $\max_{\mathbf G\in
   \mathcal G} \{\lambda(\mathbf I-\mathbf G) \}< 1$. By Theorem
 \ref{theor:fjow}, it indicates that for each $\mathbf G \in \mathcal
 G$, $\mathbf G^{-1}\geq \mathbf 0$ exists, and thus $\mathbf p = \mathbf
 G^{-1}\mathbf n_{\mathbf G} \geq  \mathbf 0$ exists. For each receiver, define 
 \begin{equation*}
   \mathcal A_{i}^{k_i} = \left\{\mathbf p\in \mathbb R^N: \mathbf
 a_i^{k_i} \mathbf p \geq n_i^{k_i}, \mathbf p \geq \mathbf \mathbf 0 \right\}.
 \end{equation*}
Note that $\mathbf  a_i^{k_i}$ is a row vector as defined before. $\mathcal A_{i}^{k_i}$ is an intersection of half-spaces and thus is convex.
Our proof is based on Helly's theorem given below.

\begin{theorem}\textnormal{\textbf{(Helly's theorem)}\cite{nvkdp}.}
 Let $\mathcal F$ be a finite collection of convex sets in $\mathbb R^N$. The
 intersection of all the sets of $\mathcal F$ is non-empty if and only if any $N + 1$ of them has non-empty intersection.
\end{theorem}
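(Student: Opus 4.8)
The plan is to prove Helly's theorem by the classical route through Radon's theorem, treating the two directions asymmetrically. The ``only if'' direction is immediate: if $\bigcap_{C\in\mathcal F} C$ is non-empty, every subcollection has an intersection containing this common point, so in particular every $N+1$ of the sets meet. All the work is in the ``if'' direction, and I would establish it by induction on the number of sets $m=|\mathcal F|$.

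For the base case $m=N+1$ (and the degenerate $m\le N$), the full collection is itself among the $(N+1)$-subcollections, so the hypothesis gives the conclusion directly. For the inductive step, suppose $m\ge N+2$ and that the claim holds for every smaller collection. Writing $\mathcal F=\{C_1,\dots,C_m\}$, I would first delete each set in turn: for each $k$, the remaining $m-1$ sets still satisfy the $(N+1)$-wise intersection hypothesis (being a subcollection of $\mathcal F$, and $m-1\ge N+1$), so by the inductive hypothesis there is a point $x_k\in\bigcap_{j\ne k}C_j$. This produces $m$ points $x_1,\dots,x_m$ in $\mathbb R^N$, where $x_k$ lies in every set except possibly $C_k$.

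Next I would invoke Radon's theorem: any $m\ge N+2$ points in $\mathbb R^N$ admit a partition of their index set into disjoint $I$ and $J$ with $\operatorname{conv}\{x_i:i\in I\}\cap\operatorname{conv}\{x_j:j\in J\}\ne\emptyset$. Fix a point $p$ in this intersection. To verify $p\in C_k$ for an arbitrary $k$, note that $k$ lies in exactly one of $I,J$; say $k\in I$. Then for every $j\in J$ we have $j\ne k$, so $x_j\in C_k$ by construction, and convexity of $C_k$ gives $\operatorname{conv}\{x_j:j\in J\}\subseteq C_k$; since $p$ lies in this hull, $p\in C_k$. The symmetric argument handles $k\in J$. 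As $k$ was arbitrary, $p\in\bigcap_{k=1}^m C_k$, closing the induction.

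The crux of the argument is Radon's theorem, which is where I expect the only real difficulty to sit, although it is itself a short dimension count: $m\ge N+2$ points in $\mathbb R^N$ are affinely dependent, so the homogeneous system $\sum_i\lambda_i x_i=\mathbf 0$, $\sum_i\lambda_i=0$ ($N+1$ equations in $m$ unknowns) has a non-trivial solution, and splitting the indices by the sign of $\lambda_i$ and normalizing the positive weights yields the common point of the two hulls. Since Helly's theorem is quoted here as a known result from \cite{nvkdp}, the real task is to reassemble this standard Radon-based induction rather than to discover anything new; the one point demanding care is to perform the Radon partition over all $m$ indices, not merely over $N+2$ of the points, so that membership of $p$ in \emph{every} $C_k$ can be concluded.
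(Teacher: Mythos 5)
Your proposal is correct, but there is essentially nothing in the paper to compare it against: the paper quotes Helly's theorem as a known result from \cite{nvkdp} and uses it as a black box in the proof of Theorem 2, supplying no proof of its own. Your reconstruction is the classical Radon-based induction, and it is sound. The ``only if'' direction is indeed immediate; the deletion step legitimately produces points $x_k\in\bigcap_{j\neq k}C_j$ because each deleted collection still satisfies the $(N+1)$-wise hypothesis and has at least $N+1$ members; and applying Radon to all $m$ points, rather than to just $N+2$ of them, is exactly what makes the final verification $p\in C_k$ go through for every $k$, since the entire Radon class opposite to $k$ consists of indices $j\neq k$. Your sketch of Radon's theorem is also complete in essentials: affine dependence yields a nontrivial $(\lambda_i)$ with $\sum_i\lambda_i x_i=\mathbf 0$ and $\sum_i\lambda_i=0$, both sign classes are automatically non-empty since the $\lambda_i$ sum to zero without all vanishing, and indices with $\lambda_i=0$ may be assigned to either side (or dropped) without affecting the containment $\operatorname{conv}\{x_j:j\in J\}\subseteq C_k$. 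One small imprecision: your ``degenerate $m\le N$'' base case is not actually covered by your justification, because for $m\le N$ there are no $(N+1)$-element subcollections, so the hypothesis is vacuous while the conclusion can fail (two disjoint convex sets in $\mathbb R^2$, say); the theorem should be read with $|\mathcal F|\ge N+1$, or with ``any at most $N+1$ of them.'' This imprecision is inherited from the paper's own statement and is harmless in its application, where $\mathcal F$ has $K=\sum_i K_i$ members and $K\ge N+1$ whenever the multicast setting is non-trivial (some $K_i\ge 2$), the remaining unicast case being handled separately by Theorem 1.
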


In our case,
\begin{equation}
 \mathcal F = \left\{\mathcal A_{i}^{k_i}:
 i=1,\ldots,N, k_i\in \mathcal K_i\right \}.
\end{equation}
There are in total $K$ convex sets in $\mathcal F$. If any
$N+1$ of them have non-empty intersection, then all of them have
non-empty intersection, i.e., the SINR is feasible. The number of all combinations of such $N+1$ sets is
$\binom{K}{N+1}$. We first show the
proof for $N=2$. Then we use the mathematical induction to prove the general case. 

\begin{lemma}\label{Lemma:eodjj}
Suppose $\mathbf X=(X_{ij})$ is an $N\times N$ matrix satisfying $X_{ij}=1$
for $i=j$ and $X_{ij}\leq 0$ for $i\neq j$. Let $\mathcal S$ be a subset
of $\{1,\ldots,N\}$ and $\mathbf X'$ be the matrix by removing the
$i$-th row and $i$-th column of $\mathbf X$ for all $i\in \mathcal S$. If $\lambda(\mathbf I -
\mathbf X)<1$, then $\lambda(\mathbf I -
\mathbf X')<1$.
\end{lemma}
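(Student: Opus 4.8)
The plan is to recognize that, writing $\mathbf B = \mathbf I - \mathbf X$, the hypotheses $X_{ii}=1$ and $X_{ij}\le 0$ make $\mathbf B$ a nonnegative matrix with zero diagonal, and the matrix $\mathbf I - \mathbf X'$ is exactly the principal submatrix $\mathbf B'$ of $\mathbf B$ obtained by deleting the rows and columns indexed by $\mathcal S$. Thus the lemma reduces to the classical monotonicity statement: the spectral radius of any principal submatrix of a nonnegative matrix does not exceed the spectral radius of the full matrix. Since for nonnegative matrices the Perron--Frobenius eigenvalue $\lambda(\cdot)$ coincides with the spectral radius (the maximum-modulus eigenvalue, which is well defined even when the submatrix fails to be irreducible), it suffices to prove $\lambda(\mathbf B') \le \lambda(\mathbf B)$.

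First I would reduce the principal-submatrix comparison to an entrywise comparison. Let $\tilde{\mathbf B}$ be the $N\times N$ matrix obtained from $\mathbf B$ by zeroing out every entry lying in a row or column indexed by $\mathcal S$. After permuting indices so that $\mathcal S$ is placed last, $\tilde{\mathbf B}$ has the block form $\left[\begin{smallmatrix}\mathbf B' & \mathbf 0\\ \mathbf 0 & \mathbf 0\end{smallmatrix}\right]$, so its eigenvalues are those of $\mathbf B'$ together with zeros; hence $\lambda(\tilde{\mathbf B}) = \lambda(\mathbf B')$, using $\mathbf B' \ge \mathbf 0$ so that $\lambda(\mathbf B')\ge 0$. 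Moreover, since only nonnegative entries were deleted, $\mathbf 0 \le \tilde{\mathbf B}\le \mathbf B$ entrywise.

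The core step is then the entrywise monotonicity of the spectral radius for nonnegative matrices: $\mathbf 0 \le \mathbf C \le \mathbf B$ implies $\lambda(\mathbf C)\le \lambda(\mathbf B)$. I would establish this via Gelfand's formula $\lambda(\mathbf M) = \lim_{k\to\infty}\|\mathbf M^k\|_\infty^{1/k}$, valid for any square matrix with the submultiplicative maximum-row-sum norm $\|\cdot\|_\infty$. An easy induction shows $\mathbf 0 \le \mathbf C^k \le \mathbf B^k$ entrywise for all $k$, and $\|\cdot\|_\infty$ is monotone on nonnegative matrices, so $\|\mathbf C^k\|_\infty \le \|\mathbf B^k\|_\infty$; taking $k$-th roots and letting $k\to\infty$ yields $\lambda(\mathbf C)\le\lambda(\mathbf B)$. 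Applying this with $\mathbf C = \tilde{\mathbf B}$ gives $\lambda(\mathbf B') = \lambda(\tilde{\mathbf B}) \le \lambda(\mathbf B) = \lambda(\mathbf I - \mathbf X) < 1$, as required.

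The only delicate point is that $\mathbf B' = \mathbf I - \mathbf X'$ need not be irreducible, so the Perron--Frobenius theory quoted earlier for irreducible matrices does not apply verbatim; this is why I would phrase the argument in terms of the spectral radius and Gelfand's formula rather than a dominant eigenvector. An alternative that sidesteps Gelfand's formula is a Collatz--Wielandt argument: take a nonnegative eigenvector $\mathbf v'$ of $\mathbf B'$ for $\lambda(\mathbf B')$, extend it by zeros on $\mathcal S$ to $\mathbf v \ge \mathbf 0$, verify $\mathbf B \mathbf v \ge \lambda(\mathbf B')\,\mathbf v$ entrywise, and invoke $\lambda(\mathbf B) \ge \min_{i:\,v_i>0}(\mathbf B\mathbf v)_i/v_i \ge \lambda(\mathbf B')$. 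I expect the entrywise-monotonicity route to be the cleaner one to write out in full.
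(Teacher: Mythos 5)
Your proof is correct, and it takes a genuinely different route from the paper's. The paper stays inside the power-control framework: from $\lambda(\mathbf I-\mathbf X)<1$ it invokes Theorem \ref{theor:fjow} to obtain $\mathbf p\geq \mathbf 0$ with $\mathbf X\mathbf p\geq \mathbf 0$, forms $\mathbf p'$ by deleting the entries indexed by $\mathcal S$, notes that $\mathbf X'\mathbf p'\geq \mathbf 0$ because the discarded off-diagonal terms are nonpositive, and then applies Theorem \ref{theor:fjow} again in the reverse direction to conclude $\lambda(\mathbf I-\mathbf X')<1$. You instead prove the underlying matrix-analytic fact directly: with $\mathbf B=\mathbf I-\mathbf X\geq \mathbf 0$, the spectral radius is monotone under entrywise domination (via Gelfand's formula and $\mathbf 0\leq \mathbf C^k\leq \mathbf B^k$), and the principal submatrix $\mathbf B'$ is dominated by $\mathbf B$ after zero-padding, so $\lambda(\mathbf B')\leq\lambda(\mathbf B)<1$. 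As for what each approach buys: the paper's argument is two lines given Theorem \ref{theor:fjow}, but it quietly applies that theorem to $\mathbf I-\mathbf X'$, which need not be irreducible, and the direction ``existence of $\mathbf p'\geq\mathbf 0$ with $\mathbf X'\mathbf p'\geq\mathbf 0$ implies $\lambda(\mathbf I-\mathbf X')<1$'' is delicate as literally stated (e.g.\ $\mathbf p'=\mathbf 0$ satisfies the hypothesis vacuously), so strictness and irreducibility bookkeeping is elided there; your route is self-contained, needs no irreducibility at any point, and resolves the reducibility issue you correctly flag by working with the spectral radius throughout. Incidentally, the Collatz--Wielandt variant you sketch---extend a nonnegative eigenvector of $\mathbf B'$ by zeros on $\mathcal S$ and check $\mathbf B\mathbf v\geq\lambda(\mathbf B')\mathbf v$---is the mirror image of the paper's deletion of entries from $\mathbf p$, run from the subsystem up to the full system rather than the other way around, so the two proofs are closer in spirit than they first appear.
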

\begin{proof}
  Since $\lambda(\mathbf I - \mathbf X)<1$, by Theorem
  \ref{theor:fjow}, there exists a vector $\mathbf p\geq \mathbf 0$
  such that $\mathbf X\mathbf p\geq \mathbf 0$. Let $\mathbf p'\in
  \mathbb R^{N-|\mathcal S|}$ be the vector constructed by removing
  the $i$-th entry in $\mathbf p$ for all $i\in \mathcal S$. Since
  $X_{ij}\leq 0$ for $i\neq j$, it can be verified that $\mathbf
  X'\mathbf p'\geq  \mathbf 0$, which implies $\lambda(\mathbf I -
\mathbf X')<1$.
\end{proof}

\begin{lemma}\label{lemma:nncnc}
 Consider $\hat{\mathbf G}, \tilde{\mathbf G} \in \mathcal G$ such that
 $\hat{\mathbf G}$ differs from $\tilde{\mathbf G}$ only in one row. i.e., $\hat{k}_i \neq
 \tilde{k}_i$ for one $i \in \{1,\ldots, N\}$ and $\hat{k}_j = \tilde{k}_j$
 for $j\neq i$. Let $\hat{\mathbf p} = \hat{\mathbf G}^{-1}\mathbf
 n_{\hat{\mathbf G}}$ and $\tilde{\mathbf p} = \tilde{\mathbf G}^{-1}\mathbf
 n_{\tilde{\mathbf G}}$. There exists $\mathbf p\in \{\hat{\mathbf
  p},\tilde{\mathbf p}\}$ such that $\tilde{\mathbf G}\mathbf
  p\geq  \mathbf n_{\tilde{\mathbf G}}$ and $\hat{\mathbf G}\mathbf p\geq
 \mathbf n_{\hat{\mathbf G}}$.
\end{lemma}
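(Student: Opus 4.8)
The plan is to exploit the fact that $\hat{\mathbf G}$ and $\tilde{\mathbf G}$ agree in every row except row $i$, so that each candidate power already satisfies \emph{every} constraint of \emph{both} systems except possibly the single constraint coming from row $i$ of the other system. Indeed, since $\hat k_j=\tilde k_j$ for $j\neq i$, the rows $\mathbf a_j^{\hat k_j}=\mathbf a_j^{\tilde k_j}$ and the noise entries $n_j^{\hat k_j}=n_j^{\tilde k_j}$ coincide for $j\neq i$. Because $\hat{\mathbf G}\hat{\mathbf p}=\mathbf n_{\hat{\mathbf G}}$ holds with equality, $\hat{\mathbf p}$ meets row $i$ of $\hat{\mathbf G}$ and, as the other rows are shared, it meets rows $j\neq i$ of $\tilde{\mathbf G}$ as well; the only inequality that can fail for $\hat{\mathbf p}$ is $\mathbf a_i^{\tilde k_i}\hat{\mathbf p}\geq n_i^{\tilde k_i}$. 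Symmetrically, the only inequality that can fail for $\tilde{\mathbf p}$ is $\mathbf a_i^{\hat k_i}\tilde{\mathbf p}\geq n_i^{\hat k_i}$. So it suffices to show that at least one of these two scalar inequalities holds.

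First I would parametrize both candidates by their $i$-th coordinate. After reordering so that $i$ is the last index, fix $p_i=t$ and solve the shared $j\neq i$ equations; this yields a unique $\boldsymbol\phi(t)\in\mathbb R^{N-1}$ for the remaining coordinates, since the coefficient matrix is the common submatrix $\mathbf G'$ of $\hat{\mathbf G}$ (equivalently of $\tilde{\mathbf G}$) obtained by deleting row and column $i$, which is invertible with $(\mathbf G')^{-1}\geq\mathbf 0$ by Lemma~\ref{Lemma:eodjj} and Theorem~\ref{theor:fjow}. As the entries coupling $t$ into these equations are nonpositive, $\boldsymbol\phi$ is affine and componentwise nondecreasing in $t$. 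By construction $\hat{\mathbf p}=(\boldsymbol\phi(\hat p_i),\hat p_i)$ and $\tilde{\mathbf p}=(\boldsymbol\phi(\tilde p_i),\tilde p_i)$ both lie on the single curve $t\mapsto(\boldsymbol\phi(t),t)$.

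The key step is to track, along this curve, the row-$i$ residual of a fixed receiver, $\rho_{k_i}(t)=\mathbf a_i^{k_i}(\boldsymbol\phi(t),t)-n_i^{k_i}$. Substituting $\boldsymbol\phi$ shows $\rho_{k_i}$ is affine in $t$ with slope equal to the Schur complement $s_{k_i}=\det\mathbf G_{k_i}/\det\mathbf G'$, where $\mathbf G_{k_i}$ is the full unicast matrix carrying receiver $k_i$ in row $i$. The \textbf{main obstacle} is showing this slope is strictly positive: nonnegativity of $(\mathbf G')^{-1}$ alone only yields $s_{k_i}\leq 1$. Positivity follows from the spectral hypothesis: since $\lambda(\mathbf I-\mathbf G_{k_i})<1$, every eigenvalue $\eta$ of $\mathbf I-\mathbf G_{k_i}$ satisfies $|\eta|<1$, so the eigenvalues $1-\eta$ of $\mathbf G_{k_i}$ have positive real part and, pairing complex conjugates, give $\det\mathbf G_{k_i}>0$; likewise $\det\mathbf G'>0$ by Lemma~\ref{Lemma:eodjj}. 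Hence $s_{k_i}>0$ and $\rho_{k_i}$ is strictly increasing.

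Applying this to $k_i=\hat k_i$ and $k_i=\tilde k_i$, the residuals $\rho_{\hat k_i}$ and $\rho_{\tilde k_i}$ are strictly increasing with unique roots $\hat p_i$ and $\tilde p_i$. The two candidate inequalities then read $\rho_{\tilde k_i}(\hat p_i)\geq 0\iff\hat p_i\geq\tilde p_i$ and $\rho_{\hat k_i}(\tilde p_i)\geq 0\iff\tilde p_i\geq\hat p_i$. Since $\hat p_i$ and $\tilde p_i$ are real numbers, one of the two orderings must hold: if $\hat p_i\geq\tilde p_i$ take $\mathbf p=\hat{\mathbf p}$, otherwise take $\mathbf p=\tilde{\mathbf p}$. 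In either case the chosen vector satisfies both $\hat{\mathbf G}\mathbf p\geq\mathbf n_{\hat{\mathbf G}}$ and $\tilde{\mathbf G}\mathbf p\geq\mathbf n_{\tilde{\mathbf G}}$, which completes the proof.
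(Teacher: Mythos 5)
Your proof is correct, and it organizes the argument in a genuinely different way from the paper. The paper makes the same opening reduction (only the row-$i$ constraint of the other system can fail, since the remaining rows are shared and hold with equality), and it also partitions $\hat{\mathbf G}$ and $\tilde{\mathbf G}$ into blocks sharing $C$ and $D$; but from there it proceeds by direct algebra: it writes out $\hat{\mathbf G}^{-1}$ via block inversion, with $a=(1-AD^{-1}C)^{-1}>0$ and $b=(1-BD^{-1}C)^{-1}>0$ positive because the full inverses are positive, computes both cross-residuals explicitly, and shows their product equals $-\bigl[\sqrt{ab^{-1}}(AD^{-1}\mathbf n'-n_{\hat{k}_1})-\sqrt{ba^{-1}}(BD^{-1}\mathbf n'-n_{\tilde{k}_1})\bigr]^2\leq 0$, strict when $\hat{\mathbf p}\neq\tilde{\mathbf p}$ (that degenerate case is handled separately), so one residual must be nonnegative. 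You replace the perfect-square trick with a monotonicity argument: both candidates lie on the common affine curve $t\mapsto(\boldsymbol\phi(t),t)$, each row-$i$ residual is affine in $t$ with slope equal to a Schur complement, and that slope is strictly positive by your M-matrix determinant argument (eigenvalues of $\mathbf G_{k_i}$ have positive real part since $\lambda(\mathbf I-\mathbf G_{k_i})<1$); then trichotomy of $\hat p_i$ versus $\tilde p_i$ finishes. Underneath, the algebra coincides: your slopes are exactly $a^{-1}$ and $b^{-1}$, and your identity $\rho_{\tilde{k}_1}(\hat p_1)=b^{-1}(\hat p_1-\tilde p_1)$ reproduces the paper's residual formula. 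What your organization buys is a cleaner finish --- no perfect square, no separate treatment of $\hat{\mathbf p}=\tilde{\mathbf p}$, and an explicit selection rule (take the candidate with the larger $i$-th power coordinate); what the paper's computation buys is the slightly stronger conclusion that when $\hat{\mathbf p}\neq\tilde{\mathbf p}$ exactly one of the two cross-inequalities holds, and with strict inequality. One remark: like the paper's own proof, yours relies on the standing assumption $\max_{\mathbf G\in\mathcal G}\{\lambda(\mathbf I-\mathbf G)\}<1$ of the surrounding sufficiency argument (you need it both for Lemma~\ref{Lemma:eodjj} applied to $\mathbf G'$ and for the determinant positivity), so this reliance is shared context rather than a gap.
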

\begin{proof}
Since $\hat{\mathbf p} = \hat{\mathbf G}^{-1}\mathbf
 n_{\hat{\mathbf G}}$ and $\tilde{\mathbf p} = \tilde{\mathbf G}^{-1}\mathbf
 n_{\tilde{\mathbf G}}$, we have $\hat{\mathbf G}\hat{\mathbf p} = \mathbf
 n_{\hat{\mathbf G}}$ and $\tilde{\mathbf G}\tilde{\mathbf p} = \mathbf
 n_{\tilde{\mathbf G}}$. If $\hat{\mathbf p} = \tilde{\mathbf p}$,
 automatically we have $\hat{\mathbf G}\tilde{\mathbf p} =\mathbf
 n_{\hat{\mathbf G}}$ and $\tilde{\mathbf G}\hat{\mathbf p}=\mathbf
 n_{\tilde{\mathbf G}}$. In the following discussion, we consider the case when $\hat{\mathbf p}\neq \tilde{\mathbf p}$.
  Without loss of generality, assume that $\hat{\mathbf G}$ and $\tilde{\mathbf G}$ differ in the first row, that is, $\hat{k}_1 \neq
 \tilde{k}_1$ and $\hat{k}_j = \tilde{k}_j$
 for $j\neq 1$. Let us partition $\hat{\mathbf G}$ into four blocks as follows.
 \begin{align*}
\hat{\mathbf G} = 
\left[
\begin{array}{c|ccc}
1 &-\mu_1\frac{g_{r_1^{\hat{k}_1},t_2}}{g_{r_1^{\hat{k}_1},t_1}} & \cdots &  -\mu_1\frac{g_{r_1^{\hat{k}_1},t_N}}{g_{r_1^{\hat{k}_1},t_1}}\\\hline
 -\mu_2\frac{g_{r_2^{\hat{k}_2},t_1}}{g_{r_2^{\hat{k}_2},t_2}}&1  & \cdots &  -\mu_2\frac{g_{r_2^{\hat{k}_2},t_N}}{g_{r_2^{\hat{k}_2},t_2}}\\
\vdots & \vdots& \ddots & \vdots \\
-\mu_N\frac{g_{r_N^{\hat{k}_N},t_1}}{g_{r_N^{\hat{k}_N},t_N}}&-\mu_N\frac{g_{r_N^{\hat{k}_N},t_2}}{g_{r_N^{\hat{k}_N},t_N}}  & \cdots &  1
\end{array}
\right]=\left[
\begin{array}{c|c}
1  &A \\ \hline
 C &D
\end{array}\right] .
 \end{align*}
Similarly, $\tilde{\mathbf G}$ is partitioned into four blocks as $\tilde{\mathbf G} =\left[
\begin{array}{c|c}
1  & B\\ \hline
 C &D
\end{array}\right]$.
Note that $\hat{\mathbf G}$ and $\tilde{\mathbf G}$ share the same
three blocks: $1, C$ and $D$. We consider  $\hat{\mathbf G}\tilde{\mathbf p} - \mathbf
 n_{\hat{\mathbf G}}$ and $\tilde{\mathbf G}\hat{\mathbf p} - \mathbf
 n_{\tilde{\mathbf G}}$. Since $\mathbf a_{i}^{\tilde{k}_i} = \mathbf
 a_{i}^{\hat{k}_i}$ and $n_{\tilde{k}_i}=n_{\hat{k}_i}$ for
 $i=2,\ldots,N$, $\mathbf a_{i}^{\hat{k}_i}\tilde{\mathbf p}
 =\mathbf a_{i}^{\tilde{k}_i}\tilde{\mathbf p}=  n_{\tilde{k}_i}=
 n_{\hat{k}_i}$ and $\mathbf a_{i}^{\tilde{k}_i}\hat{\mathbf p}
 =\mathbf a_{i}^{\hat{k}_i}\hat{\mathbf p}= n_{\hat{k}_i}=
 n_{\tilde{k}_i}$ for $i=2,\ldots,N$. If $\mathbf a_{1}^{\hat{k}_1}\tilde{\mathbf p} = n_{\hat{k}_1}$ or $\mathbf a_{1}^{\tilde{k}_1}\hat{\mathbf p} = n_{\tilde{k}_1}$,
 then $\hat{\mathbf G}\tilde{\mathbf p}= \mathbf
 n_{\hat{\mathbf G}}$ or $\tilde{\mathbf G}\hat{\mathbf p} = \mathbf
 n_{\tilde{\mathbf G}}$, which implies $\hat{\mathbf p} =
 \tilde{\mathbf p}$. Therefore when 
$\hat{\mathbf p}\neq \tilde{\mathbf p}$,
 we must have $\mathbf a_{1}^{\tilde{k}_1}\hat{\mathbf p} \neq n_{\tilde{k}_1}$ and $\mathbf
a_{1}^{\hat{k}_1}\tilde{\mathbf p}\neq n_{\hat{k}_1}$. In the
following we prove that, either $\mathbf
a_{1}^{\tilde{k}_1}\hat{\mathbf p} > n_{\tilde{k}_1}$ or $\mathbf
a_{1}^{\hat{k}_1}\tilde{\mathbf p} > n_{\hat{k}_1}$ but not both, that is, $(\mathbf
a_{1}^{\tilde{k}_1}\hat{\mathbf p} - n_{\tilde{k}_1})(\mathbf
a_{1}^{\hat{k}_1}\tilde{\mathbf p} - n_{\hat{k}_1})< 0$.

Since $\hat{\mathbf G}^{-1}>0$ exists, by Theorem \ref{theor:fjow} and Lemma \ref{Lemma:eodjj}, $D^{-1}$ exists. By block-wise inversion \cite{38924}, the inverse of $\hat{\mathbf G}$ can be written as
\begin{equation*}
\hat{\mathbf G}^{-1} = \begin{bmatrix}
a  & -aAD^{-1}\\ 
 -D^{-1}Ca & D^{-1}+D^{-1}CaAD^{-1}
\end{bmatrix},
\end{equation*}
where $a = (1-AD^{-1}C)^{-1}>0$. $\tilde{\mathbf G}^{-1}$ is in the same form by replacing $A$ with $B$ and replacing $a$ with $b=(1-BD^{-1}C)^{-1}>0$.  Denote $\mathbf n_{\hat{\mathbf G}} = \begin{bmatrix}
n_{\hat{k}_1}\\ \mathbf n'\end{bmatrix}$ and $\mathbf n_{\tilde{\mathbf G}} =\begin{bmatrix}
n_{\tilde{k}_1}\\ \mathbf n'\end{bmatrix}$ where $\mathbf n'\in\mathbb R^{N-1}$. We have 
 \begin{align*}
&\quad \mathbf a_{1}^{\tilde{k}_1}\hat{\mathbf p} - n_{\tilde{k}_1}\\
&=
\begin{bmatrix}
1  & B
\end{bmatrix}
\hat{\mathbf G}^{-1} \mathbf n_{\hat{\mathbf G}} - n_{\tilde{k}_1}\\
&=
\begin{bmatrix}
1  & B
\end{bmatrix}
\begin{bmatrix}
a  & -aAD^{-1}\\ 
 -D^{-1}Ca & D^{-1}+D^{-1}CaAD^{-1}
\end{bmatrix}
\begin{bmatrix}
n_{\hat{k}_1}\\ \mathbf n'\end{bmatrix}  - n_{\tilde{k}_1} \\
& = a n_{\hat{k}_1} - B D^{-1}Ca n_{\hat{k}_1} -aAD^{-1} \mathbf
n' + BD^{-1}\mathbf n' +  B D^{-1}CaAD^{-1}\mathbf n' -
n_{\tilde{k}_1}\\
& = a n_{\hat{k}_1} (1- B D^{-1}C) - (1 - B D^{-1}C) aAD^{-1} \mathbf n'+ BD^{-1}\mathbf n' - n_{\tilde{k}_1}\\
& =- ab^{-1}(AD^{-1} \mathbf n' -n_{\hat{k}_1}) + (BD^{-1}\mathbf n' - n_{\tilde{k}_1}).
\end{align*}
Similarly, we have
\begin{equation*}
 \mathbf a_{1}^{\hat{k}_1}\tilde{\mathbf p}- n_{\hat{k}_1}
  =- ba^{-1}(BD^{-1} \mathbf n' -n_{\tilde{k}_1}) + (AD^{-1}\mathbf n' - n_{\hat{k}_1}).
\end{equation*}
Then 
\begin{align*}
 & \quad (\mathbf
a_{1}^{\tilde{k}_1}\hat{\mathbf p}- n_{\tilde{k}_1})(\mathbf
a_{1}^{\hat{k}_1}\tilde{\mathbf p}- n_{\hat{k}_1}) \\
&= -ab^{-1}(AD^{-1}
\mathbf n' -n_{\hat{k}_1})^2 - ba^{-1}(BD^{-1} \mathbf n'
-n_{\tilde{k}_1})^2 + 2 (AD^{-1}
\mathbf n' -n_{\hat{k}_1}) (BD^{-1} \mathbf n'
-n_{\tilde{k}_1})\\
& = -\left[\sqrt{ab^{-1}}(AD^{-1}
\mathbf n' -n_{\hat{k}_1}) - \sqrt{ba^{-1}}(BD^{-1} \mathbf n'
-n_{\tilde{k}_1})\right]^2 \\
&\leq 0.
\end{align*}
Further, since $\mathbf a_{1}^{\tilde{k}_1}\hat{\mathbf p} \neq n_{\tilde{k}_1}$ and $\mathbf
a_{1}^{\hat{k}_1}\tilde{\mathbf p}\neq n_{\hat{k}_1}$, $(\mathbf
a_{1}^{\tilde{k}_1}\hat{\mathbf p}- n_{\tilde{k}_1})(\mathbf
a_{1}^{\hat{k}_1}\tilde{\mathbf p} - n_{\hat{k}_1})<0$.
In summary, there exists $\mathbf p\in \{\hat{\mathbf
  p},\tilde{\mathbf p}\}$ such that $\tilde{\mathbf G}\mathbf
  p\geq  \mathbf n_{\tilde{\mathbf G}}$ and $\hat{\mathbf G}\mathbf p\geq
 \mathbf n_{\hat{\mathbf G}}$.
\end{proof}

\subsection{Two Multicast Sessions $N=2$}
If $K=2$, i.e., $K_1=K_2=1$, it is the unicast scenario and Theorem \ref{theor:fjiwo} is true
straightforwardly. If $K_1=1,K_2=2$ or $K_1=2,K_2=1$ or $K_1=K_2=2$, then any
three subsets of $\mathcal F$ must be $\mathcal A_i^{1}, \mathcal
A_i^{2}, \mathcal A_j^{k_j}$ for $i=1$ or $2$ and $j\neq i$. Let
\begin{equation*}
  \hat{\mathbf G} = \begin{bmatrix} \mathbf a_i^1 \\
  \mathbf a_j^{k_j} \end{bmatrix} \ \text{and} \quad
\tilde{\mathbf G} = \begin{bmatrix} \mathbf a_i^2 \\
  \mathbf a_j^{k_j} \end{bmatrix}.
\end{equation*}
By Lemma \ref{lemma:nncnc}, there exists $\mathbf p$ such that $\tilde{\mathbf G}\mathbf p\geq
 \mathbf n_{\tilde{\mathbf G}}$ and $\hat{\mathbf G}\mathbf p\geq \mathbf
 n_{\hat{\mathbf G}}$, which implies $\mathbf p\in( \mathcal A_i^{1}\cap\mathcal
A_i^{2}\cap\mathcal A_j^{k_j})$. Further by Helly's theorem, the
intersection of all sets in $\mathcal F$ is non-empty.
For other values of $K_1$ and $K_2$, we divide the
$\binom{K_1+K_2}{3}$ combinations of three sets of $\mathcal F$ into two parts: 1)
two sets belong to transmitter $T_i$ and one set belongs to transmitter $T_j$ where $j\neq i$; 2) three
sets belong to the same transmitter $T_i$ for $i =1$ or
$2$. In the first case, the three sets could be $\mathcal A_i^{k_i}, \mathcal
A_i^{k_i'}, \mathcal A_j^{k_j}$. We use the same argument as before,
and conclude that the three sets have a non-empty intersection. In the
second case, the three sets could be $\mathcal A_i^{k_i}, \mathcal
A_i^{k_i'}, \mathcal A_i^{k_i''}$. It is easy to verified that $p_i =
\max\{n_i^{k_i},n_i^{k_i'},n_i^{k_i''}\}$ and $p_j = 0$ is one of
their intersection points. Overall, we prove that any three sets of
$\mathcal F$ have a non-empty intersection, and thus the intersection of all
sets is non-empty.

\subsection{Multicast Sessions with general $N$}
We use mathematical induction to
prove Theorem \ref{theor:fjiwo}. We already show that it is true when
$N=2$. Assume that the theorem holds for all numbers less than
or equal to $N-1$ and now we prove that it
also holds for $N$. If $K_i=1$ for all $i$, it is the unicast scenario and Theorem
\ref{theor:fjiwo} is true. Otherwise, we categorize the combinations of
$N+1$ sets of $\mathcal F$ into $N$ parts: 1) Receivers of $N$ transmitters are
involved: $\mathcal
A_1^{k_1}, \mathcal A_2^{k_2},\ldots,\mathcal A_N^{k_N}, \mathcal
A_i^{k_i'}$. 2) Receivers of $N-1$ transmitters are involved: $\mathcal
A_1^{k_1},\ldots,\mathcal A_{j-1}^{k_{j-1}}, \mathcal A_{j+1}^{k_{j+1}},\ldots,\mathcal A_N^{k_N}, \mathcal
A_i^{k_i'}, \mathcal A_l^{k_l'}$ where $i,l \neq j$. $\cdots$ D)
Receivers of $N-D+1$ transmitters are
involved. $\cdots$ N) Receivers of $1$ transmitter is involved.

We prove the first part.  Let
\begin{equation*}
  \hat{\mathbf G} = \begin{bmatrix} \mathbf a_1^{k_1} \\\vdots\\
\mathbf a_i^{k_i} \\ \vdots\\
  \mathbf a_N^{N_j} \end{bmatrix} \ \text{and} \quad
\tilde{\mathbf G} =\begin{bmatrix} \mathbf a_1^{k_1} \\\vdots\\
\mathbf a_i^{k_i'} \\ \vdots\\
  \mathbf a_N^{N_j} \end{bmatrix}.
\end{equation*}
By Lemma \ref{lemma:nncnc}, there exists $\mathbf p$ such that $\tilde{\mathbf G}\mathbf p\geq
 \mathbf n_{\tilde{G}}$ and $\hat{\mathbf G}\mathbf p\geq \mathbf
 n_{\hat{G}}$, which implies $\mathbf p\in (\cap_{j=1}^N\mathcal
A_j^{k_j}\cap\mathcal A_i^{k_i'})$.

We prove the D) part for $D=2,\ldots,N$. Suppose the $D-1$
transmitters whose receivers are not involved in the $N+1$
sets, are $d_1,d_2,\ldots,d_{D-1} \in \{1,\cdots,N\}$. We simply
let $p_{d_1}=p_{d_2}=\cdots=p_{d_{D-1}}=0$. The resulting system is equivalent
to having $N-D+1$ multicast sections characterized by matrix
$\mathbf A'$, which is constructed by removing the rows in $\mathbf A$ that
corresponds to the receivers of transmitter $d$ and the $d$-th column
of $\mathbf A$, for $d=d_1,\ldots,d_{D-1}$. Define $\mathcal G'
\subset\mathbb R^{(N-D+1)\times (N-D+1)}$ for $\mathbf A'$. For any
$\mathbf G'\in \mathcal G'$, we can find a $\mathbf G\in \mathcal G$
such that, $\mathbf G'$ is constructed by removing the $d$-th row and
$d$-th column of $\mathbf G$ for all $d=d_1,\ldots,d_{D-1}$. Since
$\lambda(\mathbf I-\mathbf G)<1$ for all $\mathbf G\in \mathcal G$, by Lemma
\ref{Lemma:eodjj}, $\lambda(\mathbf I-\mathbf G')<1$, and therefore $\max_{\mathbf G'\in
   \mathcal G'} \{\lambda(\mathbf I- \mathbf G') \}< 1$. By the inductive hypothesis, we can apply Theorem \ref{theor:fjiwo} when $N-D+1<N$,
 and thus there exists $\mathbf p'\geq
\mathbf 0$ such that $\mathbf A'\mathbf p'\geq \mathbf n'$, where
$\mathbf n'$ is obtained by removing the entries that
correspond to the receivers of transmitter $T_d$ for $d=d_1,\ldots,d_{D-1}$. By
inserting $0$ back into $\mathbf p'$ at the position of transmitter $T_d$
for all $d=d_1,\ldots,d_{D-1}$, we get a power $\mathbf p\geq \mathbf
0$ which is in the $N+1$ subsets.

Overall, we have proved that any $N+1$ subsets of $\mathcal F$ has a non-empty
intersection. By Helly's theorem, all subsets in $\mathcal F$ has an
intersection. This completes the proof
of Theorem \ref{theor:fjiwo}.

\section{Feasible SINR region and algorithm}\label{section:algorithm}
In this section, we characterize the feasible SINR region of a
wireless multicast system by
analytically obtaining its boundary points. By Proposition
\ref{pro:aofis}, we know that the feasible SINR region is downward
comprehensive. That is, if $\boldsymbol\mu$ is feasible, then any $\boldsymbol\mu'$ satisfying
$\mathbf 0\leq\boldsymbol\mu'\leq\boldsymbol\mu$ is also
feasible. Therefore, finding the boundary points is enough to figure
out the feasible SINR region. Our approach is to find the
farthest point from the origin in a given direction. In mathematics,
the problem is formulated as
\begin{align*}
\sup_{\mathbf p} \quad & \beta\\
s.t. \quad & \mathbf A(\beta\boldsymbol\mu) \mathbf p\geq \mathbf
             n(\beta\boldsymbol\mu)\\
& \mathbf p\geq \mathbf 0,     
\end{align*}
where $\boldsymbol\mu$ is a given direction. By Theorem
\ref{theor:fjiwo}, there is a feasible solution to the above problem
if and only if 
\begin{equation*}
\max_{\mathbf G\in
   \mathcal G(\beta\boldsymbol\mu)} \{\lambda(\mathbf I-\mathbf G) \}
 = \beta \cdot \max_{\mathbf G\in
   \mathcal G(\boldsymbol\mu)} \{\lambda(\mathbf I-\mathbf G) \} < 1.
\end{equation*}
That is 
\begin{equation*}
\beta<\frac{1}{\max_{\mathbf G\in
   \mathcal G(\boldsymbol\mu)} \{\lambda(\mathbf I-\mathbf G) \}}.
\end{equation*}
Therefore, the optimal value is 
\begin{equation*}
\beta^*(\boldsymbol\mu) = \frac{1}{\max_{\mathbf G\in
   \mathcal G(\boldsymbol\mu)} \{\lambda(\mathbf I-\mathbf G) \}}.
\end{equation*}
$\beta^*(\boldsymbol\mu) \boldsymbol\mu$ is a boundary
point of the SINR region.
The open line segment defined by
$\{\alpha\boldsymbol\mu:0<\alpha<\beta^*(\boldsymbol\mu) \}$ is in the
feasible SINR region $\Upsilon$, but $\alpha\boldsymbol\mu$ is not in the
feasible region if $\alpha>\beta^*(\boldsymbol\mu)$.

We note that the size of $\mathcal G$ is $\prod_{i=1}^N K_i$, which grows exponentially with $N$. It is not an efficient method to calculate
the Perron-Frobenius eigenvalue of all the embedded unicast systems
and find out the maximum one. Next, we propose
an iterative algorithm to compute $\beta^*(\boldsymbol\mu)$.
For $i = 1,2,\ldots,N$, let $\mathbf e_i$ denote the $N$-dimensional column vector such that the
$i$-th component of $\mathbf e_i$ is $1$ while the others are $0$. The
algorithm is described in Algorithm \ref{iter}.

\begin{algorithm}
\caption{Iterative algorithm}\label{iter}
\begin{algorithmic}[1]
\State Choose $\mathbf p^{(0)}\in \mathbb R^N >\mathbf 0$ and $k\gets
0$
\Repeat
\For {$i=1$ to $N$}
\State $ y_i^{(k)} \gets \max_{k_i\in \mathcal K_i}\left\{ \big(\mathbf
e_i^T-\mathbf a_i^{k_i}\big)\mathbf p^{(k) }\right\} $
\EndFor
\State $\beta^{(k)}\gets \min_{i=1}^N\left\{\frac {p_i^{(k)}}{y_i^{(k)}}\right\} $
\State $\mathbf p^{(k+1)} \gets \frac{\mathbf y^{(k)}}{||\mathbf
  y^{(k)}||}$
\State $k\gets k+1$
\Until{convergence}
 \State \textbf{return} $\beta^{(k)}$
\end{algorithmic}
\end{algorithm}

For receiver $R_i^{k_i}$, $(\mathbf e_i^T-\mathbf a_i^{k_i})\mathbf
p^{(k) }$ is the sum of the interference power. The power of transmitter
$T_i$ is updated by the maximum interference power experienced by the
receivers in its multicast session. This idea is similar
to the distributed power control algorithm for unicast systems
\cite{577019} to solve the power balancing problem. Recall
that in \cite{577019}, given a normalized interference link gain
matrix $\mathbf I-\mathbf G$, the
algorithm works as $\mathbf p^{(k+1)} = \frac{(\mathbf I-\mathbf G)\mathbf
  p^{(k)}}{||(\mathbf I-\mathbf G)\mathbf p^{(k)}||}$, where $k$ is the iteration
index. It is well known that when $\mathbf I-\mathbf G$ is primitive (to be
defined later), $||(\mathbf I-\mathbf G)\mathbf p^{(k)}||$ converges to the Perron-Frobenius eigenvalue of
$\mathbf I-\mathbf G$, and $\mathbf p^{(k)}$ converges to the corresponding
eigenvector. In our proposed algorithm, we are dealing with multicast systems.
For notation simplicity, define
\begin{equation}\label{equ:zz}
\mathcal Z(\boldsymbol\mu) = \{ \mathbf Z = \mathbf I-\mathbf G:
\mathbf G\in \mathcal G(\boldsymbol\mu)\}.
\end{equation}
$\mathcal Z$ includes the normalized interference link gain
matrices of all the embedded unicast systems and $\mathbf Z\geq
\mathbf 0$ for all $\mathbf Z\in \mathcal Z$. Given any $\mathbf p>0$, due to the structure of $ \mathcal G(\boldsymbol\mu)$, there always exists $\hat{\mathbf Z}\in \mathcal Z$ such that $\hat{\mathbf Z} \mathbf p\geq \mathbf Z \mathbf p$ for all $\mathbf Z\in \mathcal Z$. Our algorithm works as 
$\mathbf p^{(k+1)} = \frac{\mathbf Z^{(k)}\mathbf p^{(k)}}{||\mathbf
Z^{(k)}\mathbf p^{(k)}||}$, where $\mathbf Z^{(k)}\in \mathcal Z$ is
chosen such that $\mathbf Z^{(k)} \mathbf p^{(k)} \geq \mathbf Z
\mathbf p^{(k)}$ for all $\mathbf Z\in \mathcal Z$. In the rest of this section, we show the convergence of the algorithm.

\begin{lemma}\label{lemma:decreasing}
The sequence $\{\beta^{(k)}\}$ generated by Algorithm \ref{iter} is
monotonically increasing and bounded above by $\frac{1}{\max_{\mathbf G\in
   \mathcal G(\boldsymbol\mu)} \{\lambda(\mathbf I-\mathbf G) \}}=\frac{1}{\max_{\mathbf Z\in
   \mathcal Z(\boldsymbol\mu)} \{\lambda(\mathbf Z) \}}$, and thus is convergent. 
\end{lemma}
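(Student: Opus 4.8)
The plan is to read the update in Algorithm \ref{iter} as a normalized fixed-point iteration of the vector-valued map $F(\mathbf p)$ whose $i$-th entry is $\max_{\mathbf Z\in\mathcal Z(\boldsymbol\mu)}(\mathbf Z\mathbf p)_i$, so that $\mathbf y^{(k)}=F(\mathbf p^{(k)})$, $\mathbf p^{(k+1)}=F(\mathbf p^{(k)})/\|F(\mathbf p^{(k)})\|$, and $\beta^{(k)}=\min_i p_i^{(k)}/F(\mathbf p^{(k)})_i$. Two elementary properties of $F$ drive everything: since every $\mathbf Z\in\mathcal Z$ is nonnegative, $F$ is \emph{monotone} (if $\mathbf p\le\mathbf q$ then $F(\mathbf p)\le F(\mathbf q)$) and \emph{positively homogeneous} ($F(\alpha\mathbf p)=\alpha F(\mathbf p)$ for $\alpha>0$). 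First I would record that $\mathbf p^{(k)}>\mathbf 0$ for all $k$: each $\mathbf Z^{(k)}$ is nonnegative and irreducible and hence has no zero row, so $F(\mathbf p^{(k)})>\mathbf 0$ whenever $\mathbf p^{(k)}>\mathbf 0$, and normalization preserves positivity; this makes every ratio $p_i^{(k)}/F(\mathbf p^{(k)})_i$ well defined and each $\beta^{(k)}>0$.

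For monotonicity of $\{\beta^{(k)}\}$, the starting observation is that $\beta^{(k)}=\min_i p_i^{(k)}/F(\mathbf p^{(k)})_i$ is exactly the componentwise statement $\beta^{(k)}F(\mathbf p^{(k)})\le\mathbf p^{(k)}$. Writing $c_k=\|F(\mathbf p^{(k)})\|$ so that $F(\mathbf p^{(k)})=c_k\mathbf p^{(k+1)}$, this reads $\beta^{(k)}c_k\mathbf p^{(k+1)}\le\mathbf p^{(k)}$. Applying the monotone map $F$ to both sides and then using homogeneity gives $\beta^{(k)}c_k F(\mathbf p^{(k+1)})=F(\beta^{(k)}c_k\mathbf p^{(k+1)})\le F(\mathbf p^{(k)})=c_k\mathbf p^{(k+1)}$; dividing by $c_k>0$ yields $\beta^{(k)}F(\mathbf p^{(k+1)})\le\mathbf p^{(k+1)}$, which is precisely $\beta^{(k)}\le\min_i p_i^{(k+1)}/F(\mathbf p^{(k+1)})_i=\beta^{(k+1)}$. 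Hence the sequence is nondecreasing.

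For the upper bound, let $\mathbf Z^{*}=\mathbf I-\mathbf G^{*}$ attain $\max_{\mathbf Z\in\mathcal Z(\boldsymbol\mu)}\lambda(\mathbf Z)=:\rho$. Because the system is irreducible, $\mathbf Z^{*}$ is nonnegative and irreducible, so by Perron--Frobenius it has a strictly positive left eigenvector $\mathbf w>\mathbf 0$ with $\mathbf w^{T}\mathbf Z^{*}=\rho\,\mathbf w^{T}$. Since $F(\mathbf p^{(k)})\ge\mathbf Z^{*}\mathbf p^{(k)}$ componentwise and $\mathbf w\ge\mathbf 0$, I would pair the feasibility inequality $\beta^{(k)}F(\mathbf p^{(k)})\le\mathbf p^{(k)}$ with $\mathbf w$ to obtain
\[
\beta^{(k)}\rho\,\mathbf w^{T}\mathbf p^{(k)}=\beta^{(k)}\mathbf w^{T}\mathbf Z^{*}\mathbf p^{(k)}\le\beta^{(k)}\mathbf w^{T}F(\mathbf p^{(k)})\le\mathbf w^{T}\mathbf p^{(k)}.
\]
As $\mathbf w>\mathbf 0$ and $\mathbf p^{(k)}\ge\mathbf 0$ is nonzero, $\mathbf w^{T}\mathbf p^{(k)}>0$, and dividing gives $\beta^{(k)}\le 1/\rho$, which is the claimed bound $\beta^{(k)}\le 1/\max_{\mathbf Z\in\mathcal Z(\boldsymbol\mu)}\lambda(\mathbf Z)$.

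Finally, a bounded monotone real sequence converges, establishing the lemma. The only genuinely delicate point is the upper bound: it relies on choosing the \emph{left} Perron eigenvector of the worst-case matrix $\mathbf Z^{*}$ (not of the iteration-dependent $\mathbf Z^{(k)}$) together with the domination $F(\mathbf p^{(k)})\ge\mathbf Z^{*}\mathbf p^{(k)}$, so that the row-wise maximization performed by the algorithm can only strengthen the inequality. I expect the remainder to be routine once the monotonicity and positive homogeneity of $F$ are isolated, and once the identification $\max_{\mathbf Z}\lambda(\mathbf Z)=\max_{\mathbf G}\lambda(\mathbf I-\mathbf G)$ from \eqref{equ:zz} is invoked to match the statement.
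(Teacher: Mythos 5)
Your proof is correct, and it splits into two halves that compare differently with the paper's. The monotonicity half is essentially the paper's own argument in different clothing: the paper shows $\mathbf Z\mathbf p^{(k+1)} \leq \mathbf p^{(k+1)}/\beta^{(k)}$ for every $\mathbf Z\in\mathcal Z$ by chaining $\mathbf Z\mathbf p^{(k)}\leq\mathbf y^{(k)}$ with $\beta^{(k)}\mathbf y^{(k)}\leq\mathbf p^{(k)}$, which is exactly your statement $\beta^{(k)}F(\mathbf p^{(k+1)})\leq\mathbf p^{(k+1)}$ obtained from monotonicity and homogeneity of $F$; the abstraction buys nothing here beyond tidiness, though it is a clean way to package the fact that the row-wise maximization commutes with the two inequalities. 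The upper-bound half is where you genuinely diverge. The paper reads $\mathbf p^{(k)}\geq\beta^{(k)}\mathbf y^{(k)}\geq\beta^{(k)}\mathbf Z\mathbf p^{(k)}$ as $(\mathbf I-\beta^{(k)}\mathbf Z)\mathbf p^{(k)}\geq\mathbf 0$ and then invokes Theorem \ref{theor:fjow} to conclude $\lambda(\beta^{(k)}\mathbf Z)\leq 1$; you instead re-derive that inequality from scratch by pairing $\beta^{(k)}F(\mathbf p^{(k)})\leq\mathbf p^{(k)}$ with the strictly positive \emph{left} Perron eigenvector $\mathbf w$ of the worst-case matrix $\mathbf Z^*$, giving $\beta^{(k)}\rho\,\mathbf w^T\mathbf p^{(k)}\leq\mathbf w^T\mathbf p^{(k)}$. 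Your route is self-contained (it is, in effect, a direct proof of the subinvariance inequality for irreducible nonnegative matrices) and sidesteps a genuine delicacy in relying on Theorem \ref{theor:fjow}: as stated there, statement 1) with non-strict inequalities does not literally yield $\lambda\leq 1$ from $\mathbf G\mathbf p\geq\mathbf 0$, so the paper is implicitly using the subinvariance/Perron--Frobenius fact that you prove explicitly. The paper's version is shorter because it reuses stated machinery; yours is more careful about exactly which nonnegative-matrix fact is needed, at the cost of invoking the existence of a positive left eigenvector (which does require the irreducibility hypothesis, as you correctly note, and which also underlies your claim that $F$ maps positive vectors to positive vectors so that every $\beta^{(k)}$ is well defined).
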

\begin{proof}
By Algorithm \ref{iter}, we have $\mathbf y^{(k)}\geq \mathbf Z\mathbf
p^{(k)}$ for all $\mathbf Z\in \mathcal Z$, and $\mathbf p^{(k)}\geq
\beta^{(k)}\mathbf y^{(k)}$.
Then 
\begin{equation*}
\mathbf Z\mathbf p^{(k+1)} = \mathbf Z \frac{\mathbf
  y^{(k)}}{||\mathbf y^{(k)}||} \leq  \mathbf Z \frac{ \mathbf
  p^{(k)}}{\beta^{(k)}||\mathbf y^{(k)}||} \leq \frac{\mathbf
  y^{(k)}}{\beta^{(k)}||\mathbf y^{(k)}||} = \frac{\mathbf p^{(k+1)}}{\beta^{(k)}}.
\end{equation*}
Since the above inequality holds for all $\mathbf Z\in \mathcal Z$, we have
\begin{equation*}
\beta^{(k)}\leq \min_{\mathbf Z\in \mathcal Z}\left\{\min_{i=1}^N\Big\{\frac{p_i^{(k+1)}}{[\mathbf Z\mathbf p^{(k+1)^T}]_i}\Big\}\right\} = \min_{i=1}^N\left\{\frac{p_i^{(k+1)}}{ y_i^{(k+1)}}\right\} =\beta^{(k+1)}.
\end{equation*}
That is, $\{\beta^{(k)}\}$ is monotonically increasing. On the other
hand, since $\mathbf p^{(k)}\geq \beta^{(k)}\mathbf y^{(k)}\geq
\beta^{(k)}\mathbf Z\mathbf p^{(k)} $ for all $\mathbf Z\in \mathcal
Z$, that is $(\mathbf I-\beta^{(k)}\mathbf Z)\mathbf p^{(k)}\geq
\mathbf 0$, we have $\lambda(\beta^{(k)}\mathbf Z)\leq 1$ by Theorem \ref{theor:fjow}. Therefore $\beta^{(k)}\leq\frac{1}{\max_{\mathbf Z\in
   \mathcal Z(\boldsymbol\mu)} \{\lambda(\mathbf Z) \}}$, and thus $\{\beta^{(k)}\}$ is convergent.
\end{proof}

Denote $\lim_{k\rightarrow \infty} \beta^{(k)} = \beta^*$. Before we proceed to show that $\beta^* = \frac{1}{\max_{\mathbf Z\in
   \mathcal Z(\boldsymbol\mu)} \{\lambda(\mathbf Z) \}}$, we introduce
the concept of \emph{primitive} matrix and \emph{primitive}  set.

\begin{definition}{\textnormal{\cite{92048}}}
A square nonnegative matrix $\mathbf X$ is called \emph{primitive} if there exists a positive integer $n$ such that $\mathbf X^n>0$.
\end{definition}

The class of primitive matrices is a subclass of irreducible
matrices. If $\mathbf X$ is primitive, then its Perron-Frobenius
eigenvalue is strictly greater than all other eigenvalues in absolute
value. The primitive condition guarantees the convergence of the
aforementioned distributed power control algorithm for unicast
systems. In our multicast case, we need to use the concept of
primitive set, which replaces a single matrix and powers of that matrix with a set of matrices and inhomogeneous products of matrices from the set. 

\begin{definition}{\textnormal{\cite{Prim}}}
Let $\mathcal Z$ be a set of $N\times N$ nonnegative matrices. For a
positive integer $n$, let $\Theta(n)$ be an arbitrary product of $n$ matrices from $\mathcal Z$, with any
ordering and with repetitions permitted. Define $\mathcal Z$ to be a \emph{primitive} set if there is a
positive integer $n$ such that every $\Theta(n)$ is positive. 
\end{definition}

It can be seen that a necessary condition for $\mathcal Z$ to be primitive is
that $\mathbf Z$ is primitive for all $\mathbf Z\in \mathcal Z$. One of the sufficient conditions for $\mathcal Z$ to be primitive is
that for any $\mathbf Z\in \mathcal Z$, in each row and each column of
$\mathbf Z$, there are more than half of the entries that are positive
\cite{Prim}. Interested readers can refer to \cite{Prim} for more
information of the primitive set. It needs to be mentioned that when
the system is composed of two multicast sessions, $\mathbf Z \in \mathcal Z$ are always
non-primitive, and therefore $\mathcal Z$ cannot be
primitive. However, for this case, Corollary \ref{corol} already gives
an explicit and simple solution to the feasible SINR region. Algorithm
\ref{iter} works for systems with more than two multicast sessions and with $\mathcal Z$ being primitive.

\begin{theorem}\label{theorem:loveh}
If the matrix set $\mathcal Z(\boldsymbol\mu)$ defined in \eqref{equ:zz} is primitive, then $\beta^{(k)}$ converges to $\beta^*=\frac{1}{\max_{\mathbf Z\in
   \mathcal Z(\boldsymbol\mu)} \{\lambda(\mathbf Z) \}}$ for
an arbitrary initial value $\mathbf p^{(0)}>0$. Moreover, $\mathbf
p^{(k)}$ converges to a power vector $\mathbf p^*$ such that $\lim_{\alpha\rightarrow
  \infty} \Gamma(\alpha \mathbf p^*)$ achieves the boundary point $\beta^*\boldsymbol\mu$.
\end{theorem}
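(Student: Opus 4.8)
The plan is to read Algorithm \ref{iter} as a normalized power iteration for the monotone, positively homogeneous, max-of-linear operator $T(\mathbf p) = \max_{\mathbf Z\in\mathcal Z}\mathbf Z\mathbf p$, the maximum being taken componentwise. By the structural fact noted just before the theorem, this componentwise maximum is attained by a single matrix of $\mathcal Z$, so $\mathbf y^{(k)}=T(\mathbf p^{(k)})$ and the update is $\mathbf p^{(k+1)} = T(\mathbf p^{(k)})/\|T(\mathbf p^{(k)})\|$. Writing $R(\mathbf p):=\max_i (T\mathbf p)_i/p_i$ and $r(\mathbf p):=\min_i (T\mathbf p)_i/p_i$, we have $1/\beta^{(k)}=R(\mathbf p^{(k)})$. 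Lemma \ref{lemma:decreasing} already gives that $\beta^{(k)}$ increases to a limit $\beta^*\le 1/\max_{\mathbf Z}\lambda(\mathbf Z)$; what remains is to identify this limit and the limiting vector, and the whole difficulty is concentrated in extracting convergence from the primitivity hypothesis.

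First I would establish that $T$ has a positive eigenvector $\mathbf p^*$, $T\mathbf p^*=\rho\mathbf p^*$ with $\rho>0$, unique up to scaling, together with $\mathbf p^{(k)}\to\mathbf p^*$ for every $\mathbf p^{(0)}>0$. The key input is primitivity: iterating the update $n$ times yields $T^n(\mathbf p)=\mathbf Z_n\cdots\mathbf Z_1\mathbf p$ for matrices $\mathbf Z_1,\ldots,\mathbf Z_n\in\mathcal Z$ depending on $\mathbf p$, and this product is an instance of $\Theta(n)$, hence strictly positive by the definition of a primitive set. Thus $T^n$ maps the nonnegative cone minus the origin into its interior; combined with continuity and homogeneity, Brouwer's fixed point theorem on the normalized simplex yields an eigenvector, which primitivity forces to be strictly positive, and the Birkhoff/Hilbert projective-metric contraction of the positive map $T^n$ gives uniqueness of the direction and convergence of $\mathbf p^{(k)}$.

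Next I identify the value $\rho=\max_{\mathbf Z}\lambda(\mathbf Z)=:\lambda^*$. For $\rho\le\lambda^*$: since $T(\mathbf p^*)=\rho\mathbf p^*$ is attained by one matrix, there is $\hat{\mathbf Z}\in\mathcal Z$ with $\hat{\mathbf Z}\mathbf p^*=\rho\mathbf p^*$ and $\mathbf p^*>0$, so by Perron-Frobenius (Theorem \ref{theor:fjow} and its underlying theory) $\rho=\lambda(\hat{\mathbf Z})\le\lambda^*$. For $\rho\ge\lambda^*$: let $\mathbf Z^*$ attain $\lambda^*$ with Perron vector $\mathbf v^*>0$; then $T(\mathbf v^*)\ge\mathbf Z^*\mathbf v^*=\lambda^*\mathbf v^*$ gives $r(\mathbf v^*)\ge\lambda^*$, while the Collatz-Wielandt bound $r(\mathbf p)\le\rho\le R(\mathbf p)$ for all $\mathbf p>0$ (valid once $T$ has a positive eigenvector, via $T^m\mathbf p\le R(\mathbf p)^m\mathbf p$ and homogeneity) gives $r(\mathbf v^*)\le\rho$, hence $\rho\ge\lambda^*$. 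So $\rho=\lambda^*$, and by continuity $R(\mathbf p^{(k)})\to R(\mathbf p^*)=\rho$, i.e. $\beta^{(k)}\to 1/\rho=1/\lambda^*$, matching the claimed $\beta^*$.

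Finally, for the limiting power vector I compute the asymptotic SINR. As $\alpha\to\infty$ the noise term becomes negligible, so $\gamma_i^{k_i}(\alpha\mathbf p^*)\to p_i^*\big/\sum_{j\ne i}\frac{g_{r_i^{k_i},t_j}}{g_{r_i^{k_i},t_i}}p_j^*$, and the minimum over $k_i$ turns the denominator into its maximum, which by the eigenrelation $T(\mathbf p^*)=\lambda^*\mathbf p^*$ equals $\lambda^*p_i^*/\mu_i$. Hence $\gamma_i(\alpha\mathbf p^*)\to\mu_i/\lambda^*=\beta^*\mu_i$, giving $\lim_{\alpha\to\infty}\Gamma(\alpha\mathbf p^*)=\beta^*\boldsymbol\mu$. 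I expect the genuine obstacle to be the convergence in the first step: monotone homogeneous maps are only non-expansive in Hilbert's metric in general, so the strict contraction that upgrades boundedness to convergence must be drawn from the positivity of the $n$-fold products $\Theta(n)$ guaranteed by the primitive-set hypothesis \cite{Prim}, which is precisely the nonlinear analogue of the primitivity argument underpinning the unicast iteration in \cite{577019}.
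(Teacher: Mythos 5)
Your reformulation of Algorithm \ref{iter} as the normalized power iteration of the monotone homogeneous operator $T(\mathbf p)=\max_{\mathbf Z\in\mathcal Z}\mathbf Z\mathbf p$ is faithful, and several of your steps are sound and even mirror the paper: the existence of a positive eigenvector (Brouwer plus positivity of the products $\Theta(n)$), the Collatz--Wielandt identification $\rho=\max_{\mathbf Z\in\mathcal Z}\lambda(\mathbf Z)$, and the closing computation $\lim_{\alpha\to\infty}\Gamma(\alpha\mathbf p^*)=\beta^*\boldsymbol\mu$, which is essentially the paper's last step. The genuine gap is exactly at the point you yourself flag as the crux: you invoke ``Birkhoff/Hilbert projective-metric contraction of the positive map $T^n$,'' but Birkhoff's contraction theorem is a statement about a \emph{fixed positive linear} map, and $T^n$ is not linear. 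Primitivity gives you that for each individual $\mathbf p$ there is a product $\Theta_{\mathbf p}(n)>\mathbf 0$ of matrices from $\mathcal Z$ with $T^n(\mathbf p)=\Theta_{\mathbf p}(n)\mathbf p$, but this product \emph{changes with} $\mathbf p$: comparing $T^n\mathbf x$ with $T^n\mathbf y$ means comparing $\Theta_{\mathbf x}(n)\mathbf x$ with $\Theta_{\mathbf y}(n)\mathbf y$, two different linear maps, and the obvious estimate via monotonicity and homogeneity only returns nonexpansiveness, which you already had for free. For nonlinear monotone homogeneous maps, nonexpansiveness together with a strictly positive image does \emph{not} by itself yield a unique fixed direction or convergence of the normalized iterates, so the strict contraction cannot simply be ``drawn from'' positivity of the $\Theta(n)$'s; it has to be proved, e.g.\ by exploiting the extra inequality $T^n\mathbf x\geq\Theta\,\mathbf x$ for \emph{every} admissible length-$n$ product $\Theta$ together with uniform entrywise bounds on the finitely many such products. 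That argument is nontrivial and is entirely absent from your proposal, yet it is the whole content of the theorem.

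It is instructive to contrast this with how the paper proceeds: it never proves (or needs) any metric contraction. From Lemma \ref{lemma:decreasing} it has monotone convergence of $\beta^{(k)}$ to some $\beta^*$; Bolzano--Weierstrass gives a subsequential limit $\mathbf p^*$ satisfying $\mathbf Z^*\mathbf p^*\leq\frac{1}{\beta^*}\mathbf p^*$ for a maximizing $\mathbf Z^*$ at $\mathbf p^*$. Strict inequality in any component is then ruled out by contradiction: if it held, positivity of $\Theta(n)$ plus a chain of inequalities tracking $n$ further steps of the \emph{actual} algorithm (this is where the selection rule $\mathbf Z^{(k+i)}\mathbf p^{(k+i)}\geq\mathbf Z\mathbf p^{(k+i)}$ is used repeatedly) forces $\beta^{(k+n)}>\beta^*$, contradicting monotone convergence. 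Equality then gives $\beta^*=1/\lambda(\mathbf Z^*)$ by Perron--Frobenius for the irreducible matrix $\mathbf Z^*$, squeezed against the bound of Lemma \ref{lemma:decreasing}, and uniqueness of the accumulation point---hence convergence of $\mathbf p^{(k)}$---follows from the Subinvariance Theorem of \cite{92048}, not from any contraction property. To repair your proof you must either supply a genuine proof of Hilbert-metric contraction for iterates of max-of-linear maps built from a primitive set (possible, but requiring real work beyond citing Birkhoff), or replace that step with a compactness-plus-monotonicity argument of the kind the paper uses.
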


The proof is provided in Appendix \ref{section:naofffa}.
Fig. \ref{fig:con_alg} illustrates the typical behavior of Iterative
Algorithm \ref{iter}. In this example, there are four multicast
sessions and each has three receivers. The link gains are randomly drawn from a uniform distribution on $[0, 1)$.
 As we can see, $\beta^{(k)}$ converges within a
small number of iterations.  By using Algorithm \ref{iter}, we can efficiently check the feasibility of
an SINR vector $\boldsymbol\mu$ by checking the value of
$\beta^*(\boldsymbol\mu)$. If $\beta^*(\boldsymbol\mu)<1$,
$\boldsymbol\mu$ is infeasible, and vice versa. Besides, Algorithm
\ref{iter} can be used to find the optimal solution of the classic power balancing
problem for multicast systems, which is in the following form
\begin{align*}
\sup_{\mathbf p} \min_{i=1}^N &\quad\gamma_i(\mathbf p)\\
s.t. &\quad \mathbf p \geq\mathbf 0,
\end{align*}
and the solution is $\beta^*(\mathbf 1)$.

\begin{figure}
 \centering
 \begin{tikzpicture}
   \begin{axis}[
     xlabel = iteration step $k$,
     ylabel=$\beta^{(k)}$,
    width=300pt,
    height=230pt,
     xmin=1,xmax=15,
     ymin=0.6,ymax=2,
line width=1.5pt,
grid=major]
\addplot[color=orange,mark=o] table[x=Y, y=X] {convergence.txt};
   \end{axis}
 \end{tikzpicture}
 \caption{Convergence of Algorithm \ref{iter}. In this example, there
   are four multicast sessions and each has three receivers. The link
   gains are randomly drawn from a uniform distribution on $[0,1)$.}
 \label{fig:con_alg}
\end{figure}
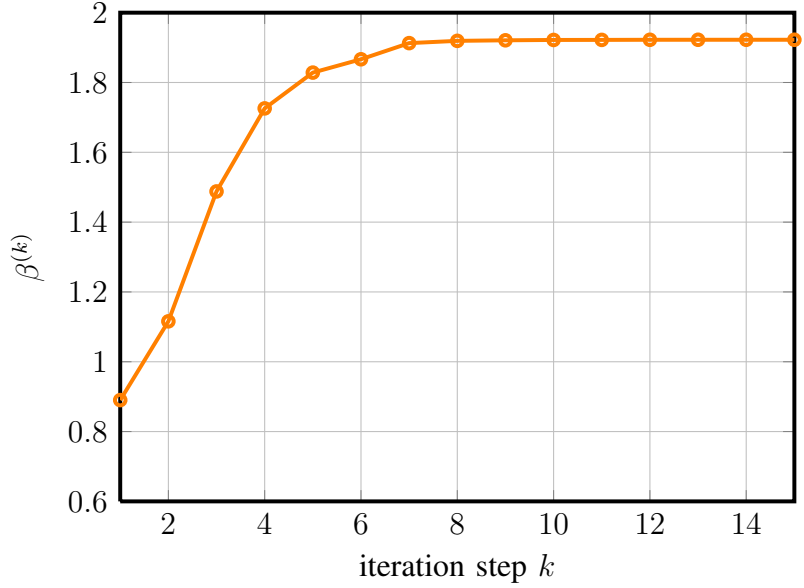

\section{Geometric Properties of the Feasible SINR
  region}\label{section:geometric}

In this section, we discuss the geometric properties of the feasible
SINR region. Let $D(\boldsymbol\mu)$ denote the diagonal matrix
constructed by 
\begin{equation*}
D(\boldsymbol\mu)=\begin{bmatrix}
 \mu_1& 0&\cdots  & 0 \\
0 & \mu_2&\cdots  & 0 \\
\vdots  & \vdots  & \ddots & \vdots  \\
0 & \cdots&0 & \mu_N
 \end{bmatrix}.
\end{equation*}
By Theorem \ref{theor:fjiwo}, the feasible SINR region is
equivalent to
\begin{align*}
\Upsilon &=\left \{\boldsymbol\mu\in \mathbb R^N:  \max_{\mathbf Z\in
   \mathcal Z(\boldsymbol\mu)} \{\lambda(\mathbf Z) \}< 1\right\}\\
&= \left\{\boldsymbol\mu\in \mathbb R^N:  \max_{\mathbf Z\in
   \mathcal Z(\mathbf 1)} \Big\{\lambda \big(D(\boldsymbol\mu)\mathbf Z\big) \Big\}< 1\right\}\\
& = \bigcap_{\mathbf Z\in
   \mathcal Z(\mathbf 1)} \Big\{\boldsymbol\mu\in \mathbb R^N:  \lambda \big(D(\boldsymbol\mu)\mathbf Z\big) < 1\Big\}.
\end{align*} 
That is, the feasible SINR region of a multicast system is the
intersection of the feasible SINR regions of all its embedded unicast systems. Let $\Upsilon^c = \mathbb R_{+}^N \setminus \Upsilon$ denote the complement of $\Upsilon$ in $\mathbb R_{+}^N$, i.e., the infeasible SINR region. 
Next, we investigate the convexity of $\Upsilon^c$ and the log-convexity of $\Upsilon$. 

\subsection{Convexity of $\Upsilon^c$}
For unicast systems, it has been proved in \cite{5730591} that the
infeasible SINR regions of a general two user system and a general
three user system are convex. It is also shown in \cite{4012511} that
the convexity of the infeasible SINR region does not hold for a
general four user system. For multicast systems, we have the following observation.

\begin{theorem}
 The infeasible SINR region of a general system consisting of two multicast sessions is convex. The convexity property does not hold for a general system consisting of more than two multicast sessions.  
\end{theorem}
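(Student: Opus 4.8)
The plan is to treat the two assertions separately, exploiting the representation established just above the statement. Since $\Upsilon=\bigcap_{\mathbf Z\in\mathcal Z(\mathbf 1)}\{\boldsymbol\mu\in\mathbb R^N:\lambda(D(\boldsymbol\mu)\mathbf Z)<1\}$, taking complements turns the intersection into a \emph{union}: $\Upsilon^c=\bigcup_{\mathbf Z\in\mathcal Z(\mathbf 1)}\{\boldsymbol\mu\in\mathbb R_+^N:\lambda(D(\boldsymbol\mu)\mathbf Z)\geq 1\}$, i.e.\ the infeasible region of the multicast system is the union of the infeasible regions of its embedded unicast systems. A union of convex sets is in general not convex, so the two-session case must be special, while for more sessions one expects the union to produce a non-convex ``dent.''

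For the first assertion ($N=2$) I would collapse the union to a single set. By Corollary \ref{corol} the feasibility of $\boldsymbol\mu$ is governed by the single embedded unicast system $\mathbf G^*$, and since each index $k_i^*=\arg\max_{k_i}\{g_{r_i^{k_i},t_j}/g_{r_i^{k_i},t_i}\}$ is independent of $\mu_i$ (the positive factor $\mu_i$ pulls out of the $\arg\max$), the \emph{same} $\mathbf G^*$ works for every $\boldsymbol\mu>\mathbf 0$. Writing $\bar a=\max_{k_1}g_{r_1^{k_1},t_2}/g_{r_1^{k_1},t_1}$ and $\bar b=\max_{k_2}g_{r_2^{k_2},t_1}/g_{r_2^{k_2},t_2}$, the eigenvalue formula following Corollary \ref{corol} gives $\lambda(\mathbf I-\mathbf G^*)^2=\mu_1\mu_2\,\bar a\,\bar b$, so $\Upsilon^c=\{\boldsymbol\mu\in\mathbb R_+^2:\mu_1\mu_2\geq 1/(\bar a\bar b)\}$. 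This is exactly the epigraph of the convex decreasing function $\mu_1\mapsto 1/(\bar a\bar b\,\mu_1)$ on $\mu_1>0$, hence convex; equivalently, one may invoke the two-user unicast convexity result of \cite{5730591} for the fixed system $\mathbf G^*$.

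For the second assertion it suffices to exhibit one system with $N=3$ whose infeasible region is not convex, and the cleanest construction keeps only two embedded unicast systems: take $K_1=2$, $K_2=K_3=1$, so $\mathcal Z(\mathbf 1)=\{\mathbf Z_1,\mathbf Z_2\}$ and $\Upsilon^c=\Upsilon_1^c\cup\Upsilon_2^c$, a union of two three-user unicast infeasible regions (each convex by \cite{5730591}). I would then choose the link gains so that the two convex regions are ``tilted'' differently, yielding points $\boldsymbol\mu^{(1)}\in\Upsilon_1^c\setminus\Upsilon_2^c$ and $\boldsymbol\mu^{(2)}\in\Upsilon_2^c\setminus\Upsilon_1^c$ whose midpoint is feasible, i.e.\ lies in $\Upsilon$; this witnesses the failure of convexity of $\Upsilon^c$.

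The main obstacle is the explicit construction in the second part: I must select $3\times 3$ gain matrices and two test points, then verify the three Perron-Frobenius conditions $\lambda(D(\boldsymbol\mu^{(1)})\mathbf Z_1)\geq 1$, $\lambda(D(\boldsymbol\mu^{(2)})\mathbf Z_2)\geq 1$, and $\max_m\lambda(D(\tfrac12(\boldsymbol\mu^{(1)}+\boldsymbol\mu^{(2)}))\mathbf Z_m)<1$. Since these roots arise from the characteristic polynomial of $3\times 3$ matrices, a clean closed-form choice is unlikely and the verification will probably be numerical, ideally accompanied by a figure of $\Upsilon^c$ displaying the non-convex dent, in the spirit of the four-user unicast counterexample of \cite{4012511}.
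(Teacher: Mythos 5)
Your proposal is correct and follows the paper's strategy in both halves, with one genuine difference in the counterexample. For $N=2$ you do exactly what the paper does: invoke Corollary \ref{corol}, note that the maximizing index $k_i^*$ is independent of $\boldsymbol\mu$ (so a single embedded unicast system governs feasibility for all $\boldsymbol\mu$), and conclude that $\Upsilon^c=\{\boldsymbol\mu\in\mathbb R_+^2:\mu_1\mu_2\geq c\}$ is the region above a hyperbola, hence convex; the paper states the same region and leaves the convexity as "ready to verify," which your epigraph argument supplies. For $N>2$ the paper also argues by counterexample, but its example is a symmetric system with \emph{three} sessions of \emph{two} receivers each (so eight embedded unicast systems), given by an explicit $6\times 3$ gain matrix and verified by plotting $\Upsilon^c$ (Fig.~\ref{fig:safe}). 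Your construction with $K_1=2$, $K_2=K_3=1$ is a more minimal and, in a sense, more transparent variant: it reduces $\Upsilon^c$ to a union of exactly two convex sets (each convex by the three-user result of \cite{5730591}), which isolates the mechanism of non-convexity — two differently tilted convex regions whose union has a dent — rather than burying it in an eight-fold union. What your write-up lacks, and the paper supplies, is the concrete instantiation: explicit gains and either explicit test points or a plot. This is a completeness issue rather than a conceptual one (your plan plainly goes through, e.g.\ by making $R_1^1$ strongly interfered by $T_2$ and $R_1^2$ strongly interfered by $T_3$ with all other couplings small, so that the two infeasible regions are approximately $\{\mu_1\mu_2\geq c_1\}$ and $\{\mu_1\mu_3\geq c_2\}$ and a feasible midpoint is easy to exhibit), but as written the second half is a proof sketch, on the same evidentiary footing as the paper's figure-based verification rather than stronger than it.
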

When there are two multicast sessions, by Corollary \ref{corol}, the feasible SINR region is 
\begin{equation*}
\Upsilon  =\left\{[\mu_1,\mu_2]\in \mathbb R_+^2: \mu_1\mu_2 <
\frac{g_{r_1^{k_1^*},t_1}}{g_{r_1^{k_1^*},t_2}}\cdot 
\frac{g_{r_2^{k_2^*},t_2}}{g_{r_2^{k_2^*},t_1}}\right\}.
\end{equation*}
It is ready to verify that $\Upsilon^c$ is convex.

When there are more than two multicast sessions, $\Upsilon^c$ is the
union of the infeasible SINR regions of all the embedded unicast
systems and is in general non-convex. Fig. \ref{fig:safe} illustrates
the $\Upsilon^c$ for a system consisting of three multicast sessions,
where the link gain matrix is given by
\begin{equation*}
\bordermatrix[{[]}]{
& T_1& T_2 & T_3\cr
R_1^1& 1 & 0.5 & 0.1\cr
R_1^2&1& 0.1 & 0.5\cr
R_2^1& 0.5 & 1& 0.1 \cr
R_2^2& 0.1&1 &0.5\cr
R_3^1& 0.5 & 0.1&1\cr
R_3^2& 0.1 & 0.5 &1}.
\end{equation*}
It can be seen that its $\Upsilon^c$ is non-convex.

\subsection{Log-convexity of $\Upsilon$}
We first introduce the notion of log-convexity. Let $\log(
\boldsymbol\mu) = [\log\mu_1,\log\mu_2,\ldots,\log\mu_N]$ and
$\log(\Upsilon) = \{\log(
\boldsymbol\mu): \boldsymbol\mu\in \Upsilon\}$. 
We say a set
$\Upsilon$ is log-convex if $\log(\Upsilon)$ is convex. Since
$\log(\cdot): \Upsilon\rightarrow \log(\Upsilon)$ is a bijective mapping, we have 
\begin{equation*}
\log(\Upsilon) = \bigcap_{\mathbf Z\in
   \mathcal Z(\mathbf 1)} \Big\{\log(\boldsymbol\mu)\in \mathbb R^N:  \lambda \big(D(\boldsymbol\mu)\mathbf Z\big) < 1\Big\}.
\end{equation*}
It has been proved in \cite{1010870} that the feasible SINR region of
a unicast system is log-convex. So $\log(\Upsilon)$, the intersection
of the SINR regions of all its embedded unicast, is also
log-convex. We conclude this by the following theorem.

\begin{theorem}
The feasible SINR region of a multicast system is log-convex. In other words, the feasible SINR, expressed in decibels, is a convex set.
\end{theorem}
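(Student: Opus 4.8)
The plan is to reuse the intersection characterization of $\log(\Upsilon)$ already derived above, combine it with the cited log-convexity result for unicast systems, and then appeal to the elementary fact that an intersection of convex sets is convex. Concretely, writing $\Upsilon_{\mathbf Z}$ for the feasible SINR region of the embedded unicast system whose normalized interference matrix is $\mathbf Z\in\mathcal Z(\mathbf 1)$, Theorem \ref{theor:fjiwo} gives the decomposition $\Upsilon=\bigcap_{\mathbf Z\in\mathcal Z(\mathbf 1)}\Upsilon_{\mathbf Z}$.

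First I would note that the map $\log(\cdot)$ is a bijection from the positive orthant $\mathbb R_+^N$ onto $\mathbb R^N$, and bijections commute with intersection. Hence
\[
\log(\Upsilon)=\log\Big(\bigcap_{\mathbf Z\in\mathcal Z(\mathbf 1)}\Upsilon_{\mathbf Z}\Big)=\bigcap_{\mathbf Z\in\mathcal Z(\mathbf 1)}\log(\Upsilon_{\mathbf Z}),
\]
which is precisely the displayed identity for $\log(\Upsilon)$ stated just before the theorem. Second, I would invoke the result of \cite{1010870} that the feasible SINR region of each embedded unicast system is log-convex; that is, each set $\log(\Upsilon_{\mathbf Z})=\{\log(\boldsymbol\mu):\lambda(D(\boldsymbol\mu)\mathbf Z)<1\}$ is convex.

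Finally, since $\log(\Upsilon)$ is a finite intersection of the convex sets $\log(\Upsilon_{\mathbf Z})$, it is itself convex, and by definition this means $\Upsilon$ is log-convex, completing the argument. I expect there to be essentially no hard part here: the substantive work has already been carried out in proving Theorem \ref{theor:fjiwo} (which supplies the intersection decomposition) and in the cited unicast log-convexity result. The only point deserving a brief word of care is the interchange of $\log$ with the intersection, which is valid because $\log$ is a bijection on $\mathbb R_+^N$; after that, the conclusion follows from the stability of convexity under intersection.
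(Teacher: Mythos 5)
Your proposal is correct and is essentially identical to the paper's own argument: both use the intersection decomposition from Theorem \ref{theor:fjiwo}, the bijectivity of $\log$ to commute it with the intersection, the cited unicast log-convexity result of \cite{1010870}, and closure of convexity under intersection. Nothing is missing.
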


\begin{figure}
 \centering
     \includegraphics[width=0.6\textwidth]{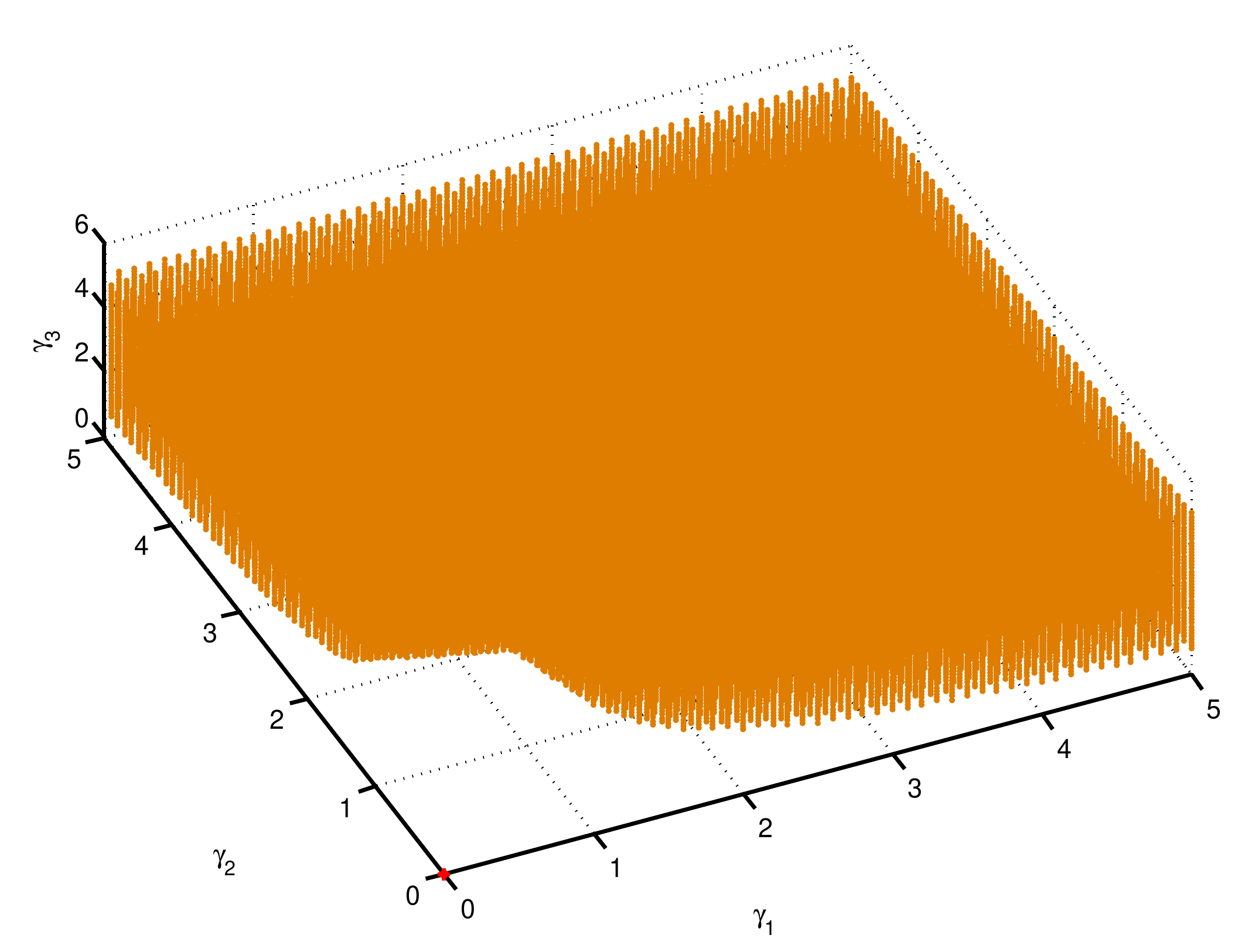}
 \caption{Infeasible SINR region of a three-multicast-session system.}
\label{fig:safe}
\end{figure}

\section{Feasibility of SINR with Power Constraints}\label{section:constraint}
So far, we have discussed the feasibility of SINR for a multicast system in the case of unlimited power. 
In this section, we consider that besides $\mathbf p\geq \mathbf 0$,
the power vector are also subject to the linear constraints
\begin{equation*}
\sum_{i\in \Omega_m}p_i \leq \bar p_{\Omega_m}, m=1,\ldots,M,
\end{equation*}
where $\Omega_m\subseteq \{1,\ldots,N\}$ and $M$ is the number of constraints. When $\Omega_m=\{1,\ldots,N\}$, it is a constraint on the total power. When $\Omega_m=\{i\}$, it is a constraint on the individual power of transmitter $T_i$. Define the power set by
\begin{equation*}
\mathcal P = \{\mathbf p\geq \mathbf 0 \ \text{and} \ \sum_{i\in \Omega_m}p_i \leq \bar p_{\Omega_m}, m=1,\ldots,M\}.
\end{equation*}
Now the feasibility of SINR vector $\boldsymbol\mu$ is decided by
whether there exists  $\mathbf p\in \mathcal P$ such that $\mathbf
A(\boldsymbol\mu)\mathbf p=\mathbf n(\boldsymbol\mu)$. Note that the
power vectors in $\mathcal P$ are downward comprehensive. That is, if
$\mathbf p'\in \mathcal P$, then $\mathbf p\in \mathcal P$ if $\mathbf
0\leq \mathbf p\leq \mathbf p' $. Hence using the same argument as in
Proposition \ref{pro:aofis}, we know that $\boldsymbol\mu$ is feasible
if and only if there exits $\mathbf p\in \mathcal P$ such that $\mathbf
A(\boldsymbol\mu)\mathbf p\geq \mathbf n(\boldsymbol\mu)$. Our results
generalize the feasibility condition derived in \cite{5466510} for a
unicast system to a multicast system. 

\begin{definition}
For a matrix $\mathbf X\in\mathbb R^{K\times N}$, a vector $\mathbf
y\in\mathbb R^{K}$ and a set $\Omega\subseteq \{1,\ldots,N\}$,
$\psi(\mathbf X, \mathbf y, \Omega)$ is the operation to add $\mathbf
y$ to the $j$-th column of $\mathbf X$, for all $j\in \Omega$. That is, $\mathbf Z = \psi(\mathbf X, \mathbf y, \Omega)$, where $Z_{ij} = X_{ij} +  y_i$ for all $i\in \{1,\ldots,K\}$ and $j\in \Omega$, and $Z_{ij} = X_{ij}$ for the else.
\end{definition}

\begin{theorem}\label{theor:jiayou}
 Consider a multicast network setting $\mathbf A(\boldsymbol\mu)$ and
 assume the matrices $\mathbf I-\mathbf
 G$ for $\mathbf G\in \mathcal G(\boldsymbol\mu)$ are all
 irreducible. There exists a power vector $\mathbf p\in \mathcal P$ such that $\mathbf A(\boldsymbol\mu)
 \mathbf p\geq \mathbf n(\boldsymbol\mu)$ if and only if 
\begin{equation*}
\max_{\mathbf G\in
   \mathcal G(\boldsymbol\mu)} \max_{m\in
   \{1,\ldots,M\}}\left\{\lambda \Big(\psi \big(\mathbf I-\mathbf G,
   \frac{\mathbf n_{\mathbf G}}{\bar p_{\Omega_m}},\Omega_m \big)\Big)
 \right\} \leq 1.
\end{equation*}
\end{theorem}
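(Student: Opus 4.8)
The plan is to reduce the power-constrained feasibility question to a statement about the componentwise-\emph{minimal} power vector of the \emph{unconstrained} multicast system, exploiting that $\mathcal P$ is downward comprehensive (the analogue of Proposition~\ref{pro:aofis}). The first step is a bookkeeping identity for $\psi$: writing $\mathbf 1_{\Omega}$ for the $0$--$1$ indicator \emph{column} vector of a set $\Omega$ (so that $\mathbf 1_{\Omega_m}^{T}\mathbf p=\sum_{i\in\Omega_m}p_i$), one has
\[
\mathbf Z'_{\mathbf G,m}:=\psi\Big(\mathbf I-\mathbf G,\tfrac{\mathbf n_{\mathbf G}}{\bar p_{\Omega_m}},\Omega_m\Big)=(\mathbf I-\mathbf G)+\tfrac{1}{\bar p_{\Omega_m}}\,\mathbf n_{\mathbf G}\,\mathbf 1_{\Omega_m}^{T},
\]
a rank-one nonnegative perturbation of $\mathbf I-\mathbf G$. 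Because $\mathbf I-\mathbf G$ is irreducible and the added term is nonnegative and nonzero ($\boldsymbol\mu>\mathbf 0$ forces $\mathbf n_{\mathbf G}>\mathbf 0$ and $\Omega_m\neq\emptyset$), $\mathbf Z'_{\mathbf G,m}$ is again irreducible, and strict monotonicity of the Perron--Frobenius eigenvalue gives $\lambda(\mathbf I-\mathbf G)<\lambda(\mathbf Z'_{\mathbf G,m})$.

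For \emph{necessity} I would argue directly. Suppose $\mathbf p\in\mathcal P$ satisfies $\mathbf A\mathbf p\geq\mathbf n$; note $\mathbf n>\mathbf 0$ forces $\mathbf p>\mathbf 0$. Fix $\mathbf G\in\mathcal G$ and $m$. From $\mathbf G\mathbf p\geq\mathbf n_{\mathbf G}$ we get $(\mathbf I-\mathbf G)\mathbf p+\mathbf n_{\mathbf G}\leq\mathbf p$, and from $\mathbf p\in\mathcal P$ we get $\mathbf 1_{\Omega_m}^{T}\mathbf p\leq\bar p_{\Omega_m}$, hence
\[
\mathbf Z'_{\mathbf G,m}\mathbf p=(\mathbf I-\mathbf G)\mathbf p+\mathbf n_{\mathbf G}\,\tfrac{\mathbf 1_{\Omega_m}^{T}\mathbf p}{\bar p_{\Omega_m}}\leq(\mathbf I-\mathbf G)\mathbf p+\mathbf n_{\mathbf G}\leq\mathbf p.
\]
So $\mathbf p>\mathbf 0$ is sub-invariant for $\mathbf Z'_{\mathbf G,m}$, and the sub-invariance characterization of Theorem~\ref{theor:fjow} yields $\lambda(\mathbf Z'_{\mathbf G,m})\leq1$; taking the maximum over $\mathbf G$ and $m$ gives the stated inequality.

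The \emph{sufficiency} direction carries the real content. Assume $\max_{\mathbf G,m}\lambda(\mathbf Z'_{\mathbf G,m})\leq1$; by the monotonicity remark every $\lambda(\mathbf I-\mathbf G)<1$, so Theorem~\ref{theor:fjiwo} makes the unconstrained system feasible. I would then bring in the componentwise-minimal element $\mathbf p^{\dagger}$ of $\{\mathbf p\geq\mathbf 0:\mathbf A\mathbf p\geq\mathbf n\}$, whose existence follows from the standard fixed-point structure of the multicast update $p_i=\max_{k_i\in\mathcal K_i}\{(\mathbf e_i^{T}-\mathbf a_i^{k_i})\mathbf p+n_i^{k_i}\}$. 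The crucial observation is that at $\mathbf p^{\dagger}$ each session $i$ has a binding receiver $k_i^{\dagger}$ with $\mathbf a_i^{k_i^{\dagger}}\mathbf p^{\dagger}=n_i^{k_i^{\dagger}}$; collecting these rows into $\mathbf G^{\dagger}\in\mathcal G$ gives $\mathbf G^{\dagger}\mathbf p^{\dagger}=\mathbf n_{\mathbf G^{\dagger}}$, i.e. $\mathbf p^{\dagger}=(\mathbf G^{\dagger})^{-1}\mathbf n_{\mathbf G^{\dagger}}$ coincides with the minimal power of a \emph{single} embedded unicast system.

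To close, I would show that $\lambda(\mathbf Z'_{\mathbf G^{\dagger},m})\leq1$ forces $\mathbf 1_{\Omega_m}^{T}\mathbf p^{\dagger}\leq\bar p_{\Omega_m}$ for every $m$. Taking a positive Perron eigenvector $\mathbf v$ of the irreducible matrix $\mathbf Z'_{\mathbf G^{\dagger},m}$, the inequality $\mathbf Z'_{\mathbf G^{\dagger},m}\mathbf v\leq\mathbf v$ rearranges to $\mathbf G^{\dagger}\mathbf v\geq\tfrac{\mathbf 1_{\Omega_m}^{T}\mathbf v}{\bar p_{\Omega_m}}\mathbf n_{\mathbf G^{\dagger}}$; applying the nonnegative inverse $(\mathbf G^{\dagger})^{-1}$ gives $\mathbf v\geq\tfrac{\mathbf 1_{\Omega_m}^{T}\mathbf v}{\bar p_{\Omega_m}}\mathbf p^{\dagger}$, and then left-multiplying by $\mathbf 1_{\Omega_m}^{T}$ and cancelling the positive scalar $\mathbf 1_{\Omega_m}^{T}\mathbf v$ yields $\mathbf 1_{\Omega_m}^{T}\mathbf p^{\dagger}\leq\bar p_{\Omega_m}$. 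Thus $\mathbf p^{\dagger}\in\mathcal P$ with $\mathbf A\mathbf p^{\dagger}\geq\mathbf n$, proving feasibility. I expect the main obstacle to be the step in the previous paragraph: the rigorous construction of $\mathbf p^{\dagger}$ and the proof that it equals $(\mathbf G^{\dagger})^{-1}\mathbf n_{\mathbf G^{\dagger}}$ for one selection $\mathbf G^{\dagger}$, since this is exactly what lets per-embedded-system eigenvalue conditions certify a \emph{single} power vector meeting all constraints at once; a secondary care point is the boundary case $\lambda=1$, where $\mathbf p^{\dagger}$ lies on $\partial\mathcal P$ and downward-comprehensiveness of $\mathcal P$ is what makes the $\leq1$ condition (rather than $<1$) the right threshold.
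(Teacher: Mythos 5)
Your proof is correct, and its sufficiency half takes a genuinely different route from the paper's. The paper disposes of both directions by reduction to the constrained-power \emph{unicast} result of \cite{5466510}: necessity follows by applying that result to every embedded unicast system, and for sufficiency the paper observes that $\lambda(\mathbf I-\mathbf G)<\lambda\big(\psi(\mathbf I-\mathbf G,\mathbf n_{\mathbf G}/\bar p_{\Omega_m},\Omega_m)\big)\leq 1$ makes every minimal unicast power $\mathbf G^{-1}\mathbf n_{\mathbf G}$ exist and lie in $\mathcal P$, then reruns the Helly's-theorem induction from the proof of Theorem \ref{theor:fjiwo}, those points (now known to respect the power constraints) serving as the witnesses that every $N+1$ of the convex sets $\mathcal A_i^{k_i}\cap\mathcal P$ intersect. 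You use neither ingredient: your necessity argument re-proves the needed unicast fact from scratch via the rank-one identity $\psi(\mathbf I-\mathbf G,\mathbf n_{\mathbf G}/\bar p_{\Omega_m},\Omega_m)=(\mathbf I-\mathbf G)+\mathbf n_{\mathbf G}\mathbf 1_{\Omega_m}^{T}/\bar p_{\Omega_m}$ plus subinvariance, and your sufficiency argument is constructive and Helly-free: it exhibits one explicit witness $\mathbf p^{\dagger}$, identifies it with $(\mathbf G^{\dagger})^{-1}\mathbf n_{\mathbf G^{\dagger}}$ via the binding receivers, and certifies $\mathbf p^{\dagger}\in\mathcal P$ by the Perron-eigenvector manipulation, which moreover handles the boundary case $\lambda=1$ with no extra work. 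The paper's route buys brevity, since everything is delegated to Theorem \ref{theor:fjiwo} and \cite{5466510}; yours buys self-containedness, an explicit optimal power vector, and the structural observation that once unconstrained feasibility is settled, only the binding embedded unicast system $\mathbf G^{\dagger}$ needs to be checked against the $M$ constraints. The one ingredient you invoke that the paper never develops is the existence and minimality of the fixed point $\mathbf p^{\dagger}$ (the standard-interference-function fact); to make the write-up airtight, add the short monotone-iteration argument: starting from $\mathbf p^{(0)}=\mathbf 0$, the iterates of $T_i(\mathbf p)=\max_{k_i\in\mathcal K_i}\{(\mathbf e_i^{T}-\mathbf a_i^{k_i})\mathbf p+n_i^{k_i}\}$ are componentwise nondecreasing and bounded above by any unconstrained feasible point (which Theorem \ref{theor:fjiwo} supplies), hence converge to a fixed point that is itself feasible, lies below every feasible point, and has a binding receiver in each session.
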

\begin{proof}
It is already known from \cite{5466510} that, for a unicast system
$\mathbf G$, there exists $\mathbf p\in \mathcal P$ such that $\mathbf
G\mathbf p\geq \mathbf n_{\mathbf G}$ if and only if $\max_{m\in
  \{1,\ldots,M\}}\{\lambda (\psi (\mathbf I-\mathbf G, \frac{\mathbf
  n_{\mathbf G}}{\bar p_{\Omega_m}},\Omega_m )) \}\leq 1$. We first prove
the necessary condition. Suppose there exists $\mathbf p\in \mathcal
P$ such that $\mathbf A(\boldsymbol\mu)\mathbf p\geq \mathbf
n(\boldsymbol\mu)$. Then for any $\mathbf G\in\mathcal
G(\boldsymbol\mu)$, $\mathbf G\mathbf p\geq \mathbf n_{\mathbf G}$, which implies $\max_{m\in \{1,\ldots,M\}}\{\lambda (\psi (\mathbf I-\mathbf G, \frac{\mathbf n_{\mathbf G}}{\bar p_{\Omega_m}},\Omega_m )) \}\leq1$. Regarding all $\mathbf G\in \mathcal G(\boldsymbol\mu)$, we have $\max_{\mathbf G\in
   \mathcal G(\boldsymbol\mu)} \max_{m\in \{1,\ldots,M\}}\{\lambda (\psi (\mathbf I-\mathbf G, \frac{\mathbf n_{\mathbf G}}{\bar p_{\Omega_m}},\Omega_m)) \}leq 1$.

Next we prove the sufficient condition. For any $\mathbf G\in \mathcal
G(\boldsymbol\mu)$, since $\mathbf 0\leq\mathbf I-\mathbf G <\psi (\mathbf I-\mathbf
G, \frac{\mathbf n_{\mathbf G}}{\bar p_{\Omega_m}},\Omega_m )$ for all
$m$, by the Perron-Frobenius Theorem for irreducible matrices
\cite{92048}, $\lambda(\mathbf I-\mathbf G)<\lambda (\psi (\mathbf
I-\mathbf G, \frac{\mathbf n_{\mathbf G}}{\bar p_{\Omega_m}},\Omega_m
))\leq1$. This implies that $\mathbf G^{-1}$ exists and $\mathbf p =\mathbf
G^{-1}\mathbf n_{\mathbf G} \in \mathcal P$. The rest of the proof
follows the same argument as the proof of Theorem \ref{theor:fjiwo}.
\end{proof}

Note that 
\begin{equation*}
\max_{\mathbf G\in
   \mathcal G(\boldsymbol\mu)} \max_{m\in
   \{1,\ldots,M\}}\left\{\lambda \Big(\psi \big(\mathbf I-\mathbf G,
   \frac{\mathbf n_{\mathbf G}}{\bar p_{\Omega_m}},\Omega_m \big)\Big)
 \right\} =  \max_{m\in
   \{1,\ldots,M\}}\max_{\mathbf G\in
   \mathcal G(\boldsymbol\mu)}\left\{\lambda \Big(\psi \big(\mathbf I-\mathbf G,
   \frac{\mathbf n_{\mathbf G}}{\bar p_{\Omega_m}},\Omega_m
   \big)\Big)
 \right\}.
\end{equation*}
Similar to \eqref{equ:zz}, for each of the $M$ linear constraints, define 
\begin{equation*}
\mathcal Z_{\Omega_m}(\boldsymbol\mu) = \left\{\psi \big(\mathbf I-\mathbf G,
   \frac{\mathbf n_{\mathbf G}}{\bar p_{\Omega_m}},\Omega_m \big):
   \mathbf G\in \mathcal G(\boldsymbol\mu)\right\}.
\end{equation*}
By using Algorithm \ref{iter} with $\mathcal
Z_{\Omega_m}(\boldsymbol\mu)$, we can find a supremum
$\beta_{\Omega_m}^*(\boldsymbol\mu)$. The farthest
point of the SINR region in direction $\boldsymbol\mu$ is then
$\min_{m=1}^M\{\beta_{\Omega_m}^*(\boldsymbol\mu)\}\boldsymbol\mu$. By this approach,
the feasible SINR region is characterized. On the other hand, if
$\min_{m=1}^M\{\beta_{\Omega_m}^*(\boldsymbol\mu)\}\geq 1$, $\boldsymbol\mu$
is feasible. Fig. \ref{fig:con_SINR} plots the feasible SINR region
of the network example in Fig. \ref{fig:example}, with a power constraint on
the total power. In this example, the link gain matrix is 
\begin{equation*}
\bordermatrix[{[]}]{
& T_1& T_2 \cr
R_1^1& 0.5326   & 0.6801\cr
R_1^2&0.5539   & 0.3672\cr
R_2^1& 0.2393  &  0.8669 \cr
R_2^2& 0.5789   & 0.4068},
\end{equation*}
and the power constraint is $p_1+p_2 \leq 2$.
The four dashed lines are the boundary of the
feasible SINR regions of four embedded unicast systems and the solid
line is the boundary of the multicast system. It can be seen that
under power contraint, the infeasible SINR region is not necessary to
be convex even for a multicast system with two multicast sessions.

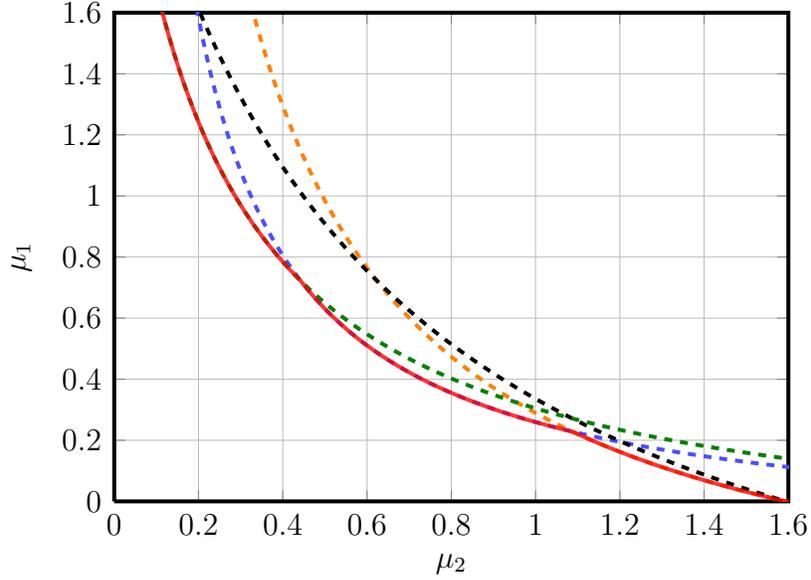
\begin{figure}
 \centering
 \begin{tikzpicture}
   \begin{axis}[
     xlabel = $\mu_2$,
     ylabel=$\mu_1$,
    width=300pt,
    height=230pt,
     xmin=0,xmax=1.6,
     ymin=0,ymax=1.6,
line width=1.5pt,
grid=major]
\addplot[color=orange,dashed] table[x=Mu1, y=Mu2_a1] {SINR_intersection2.txt};
\addplot[color=blue!70,dashed] table[x=Mu1, y=Mu2_a2] {SINR_intersection2.txt};
\addplot[color=black,dashed] table[x=Mu1, y=Mu2_a3] {SINR_intersection2.txt};
\addplot[color=green!50!black,dashed] table[x=Mu1,y=Mu2_a4]{SINR_intersection2.txt};
\addplot[color=red,opacity=.8] table[x=Mu1,y=Min]{SINR_intersection2.txt};
   \end{axis}
 \end{tikzpicture}
 \caption{Feasible SINR region for a multcast system with two
   multicast sessions, under a total power constraint. The
   dashed lines correspond to the four embedded unicast systems and the solid
line corresponds to the multicast system.}
 \label{fig:con_SINR}
\end{figure}

In the end of this section, we introduce an application of our
multicast model to a time varying unicast system. Consider a unicast
system consisting of $N$ transmitter-receiver pairs, where the channel
gains among them vary with time due to the mobility of the
receivers. Let $\mathbf h_i(t)$ for $i=1,\ldots,N$ denote the link gain vector from $N$
transmitters to the $i$-th receiver at time $t$. As argued in
\cite{5466510}, $\mathbf h_i(t)$ can be modeled with discrete states,
that is, $\mathbf h_i(t)$ is randomly selected from a finte set
$\{\mathbf h_i^1,\mathbf h_i^2,\ldots,\mathbf h_i^{K_i}\}$ for all
$i$. An SINR vector $\boldsymbol\mu$ is said to be zero-outage if there exists a power
such that no matter what the link gain realization is, the SINR is
achievable all the time. Such a zero-outage SINR problem can be mapped to a feasible
SINR problem of a multicast system. The idea is to let one receiver $R_i$
pretend to be $K_i$ receivers, i.e., $R_i^{1},\ldots,R_i^{K_i}$, and
$R_i^{k_i}$ only experiences the link gain $\mathbf h_i^{k_i}$. The
feasible SINR region of this artificial multicast system is exactly
the zero-outage SINR region of the original time-varying unicast
sytem. Theorem \ref{theor:fjiwo} and Theorem \ref{theor:jiayou} can be
applied. It needs to be mentioned that the scenario considered here is
different from that in \cite{5466510}. In \cite{5466510}, the power
can change with the channel states and is subject to an average power
constraint. In our model, the power is universal for all possible
channel states.

\section{Conclusion}\label{section:conclusion}
In this paper, we characterize the feasibility condition of an SINR
vector for a multicast system, which generalizes the Perron-Frobenius
Theorem for a unicast system. We also propose an iterative algorithm
which can efficiently check the condition and compute the boundary points of the feasible SINR
region. According to the earlier mentioned
Gaussian interference assumption, by describing the feasible SINR
region, we directly obtain the feasible rate region by applying the Shannon Capacity formula for AWGN
channels that maps the SINR to the rate.

\appendices
\section{Proof of Theorem \ref{theorem:loveh} }\label{section:naofffa}

\begin{proof}
Let $\mathbf Z^{(k)}$ denote one of the matrices at the $k$-th
iteration such that $\mathbf Z^{(k)}\mathbf p^{(k)} \geq \mathbf
Z\mathbf p^{(k)}$ for all $\mathbf Z\in \mathcal Z$. From the
construction of the algorithm, we have
\begin{equation}\label{equ:pwffff}
\mathbf
Z^{(k)}\mathbf p^{(k)}\leq \frac{1}{\beta^{(k)}}\mathbf p^{(k)}\ \text{for
all}\ k\in \mathbb N.
\end{equation}
 Moreover, there exists $1\leq i\leq N$ such that $[\mathbf
Z^{(k)}\mathbf p^{(k)}]_i = \frac{1}{\beta^{(k)}}[\mathbf
p^{(k)}]_i$.
We note that each vector $\mathbf p^{(k)}$ is a unit vector, as
$||\mathbf p^{(k)}||=1$. By the Bolzano-Weierstrass Theorem and the
compactness of the unit ball in $\mathbb R^{N}$, there exists a
convergent subsequence, that is, $\mathbf p^{(k_j)}\rightarrow \mathbf
p^{*}$. By Lemma
\ref{lemma:decreasing}, $\beta^{(k_j)}\rightarrow \beta^*$. Suppose at $\mathbf p^*$, $\mathbf Z^* \in \mathcal Z$ is one
of the matrices that satisfy $\mathbf Z^{*} \mathbf p^{*} \geq
\mathbf Z \mathbf p^{*}$ for all $\mathbf Z\in \mathcal Z$. Taking
the limit of \eqref{equ:pwffff} with respect to the subsequence indexed by
$k_j$, 
we have $\mathbf Z^{*}\mathbf p^{*}\leq \frac{1}{\beta^*} \mathbf
p^{*}$. 
 If $\mathbf Z^{*}\mathbf p^{*}= \frac{1}{\beta^*}  \mathbf p^{*}$, since $\mathbf Z^*$ is irreducible, $\beta^* =\frac{1}{\lambda(\mathbf Z^*)}\geq \frac{1}{\max_{\mathbf Z\in
   \mathcal Z(\boldsymbol\mu)} \{\lambda(\mathbf Z) \}}$. On the other hand, $\beta^*\leq \frac{1}{\max_{\mathbf Z\in
   \mathcal Z(\boldsymbol\mu)} \{\lambda(\mathbf Z) \}}$ by Lemma \ref{lemma:decreasing}. Therefore, $\beta^* = \frac{1}{\max_{\mathbf Z\in
   \mathcal Z(\boldsymbol\mu)} \{\lambda(\mathbf Z) \}}$.

If $\mathbf Z^{*}\mathbf p^{*}\neq\frac{1}{\beta^*}\mathbf p^{*}$,
since $\mathcal Z$ is primitive, there exists integer $n$ such that an
arbitrary product of $n$ matrices from $\mathcal Z$ is positive, i.e.,
$\Theta(n)>\mathbf 0$, and therefore $\Theta(n)\mathbf Z^{*}\mathbf
p^{*}< \Theta(n)\frac{1}{\beta^*}\mathbf p^{*}$. By the continuity
of the mapping, there exists $\mathbf p^{(k)}$ close enough to
$\mathbf p^{*}$ such that $\Theta(n)\mathbf Z^{*}\mathbf
p^{(k)}< \Theta(n)\frac{1}{\beta^*}\mathbf p^{(k)}$ and $\mathbf
Z^{(k)}=\mathbf Z^{*}$.
We now apply the algorithm for $n$ more iterations from $\mathbf
p^{(k)}$. For $i=0,1,\ldots,n-1$ we have the following inequalities 
\begin{equation}\label{equ:mfwff}
\mathbf Z^{(k+i+1)}\mathbf p^{(k+i)}  \leq \mathbf Z^{(k+i)}\mathbf p^{(k+i)} 
\end{equation}
due to that the selection matrix satisfies $\mathbf Z^{(k+i)}\mathbf p^{(k+i)} \geq \mathbf
Z\mathbf p^{(k+i)}$ for all $\mathbf Z\in \mathcal Z$. Meanwhile
by Algorithm \ref{iter},
\begin{align*}
\mathbf p^{(k+i)}  = \frac{\mathbf y^{(k+i-1)}}{||\mathbf
                    y^{(k+i-1)}||} = \frac{\mathbf Z^{(k+i-1)}\mathbf
                    p^{(k+i-1)}}{||\mathbf Z^{(k+i-1)}\mathbf p^{(k+i-1)}||} =  \frac{\mathbf Z^{(k+i-1)}\cdots\mathbf Z^{(k+1)}\mathbf Z^{(k)}\mathbf p^{(k)}}{||\mathbf Z^{(k+i-1)}\cdots\mathbf Z^{(k+1)}\mathbf Z^{(k)}\mathbf p^{(k)}||}.
\end{align*}
By substituting $\mathbf p^{(k+i)}$ into \eqref{equ:mfwff}, we get
\begin{equation}\label{equ:moaff}
\mathbf Z^{(k+i+1)}\mathbf Z^{(k+i-1)}\mathbf Z^{(k+i-2)} \cdots
\mathbf Z^{(k+1)}\mathbf Z^{(k)}\mathbf p^{(k)} \leq \mathbf Z^{(k+i)}\mathbf Z^{(k+i-1)}\mathbf Z^{(k+i-2)} \cdots
\mathbf Z^{(k+1)}\mathbf Z^{(k)}\mathbf p^{(k)}. 
\end{equation}
Let us take a look at these inequalities step by step. By
\eqref{equ:mfwff} for $i=0$,
$\mathbf Z^{(k+1)}\mathbf p^{(k)}  \leq \mathbf Z^{(k)}\mathbf
p^{(k)}$. By multiplying $\mathbf Z^{(k+2)}$ on both side of the inequality, we have
\begin{equation}\label{equ:moasdf}
\mathbf Z^{(k+2)}\mathbf Z^{(k+1)}\mathbf p^{(k)}  \leq \mathbf Z^{(k+2)}\mathbf Z^{(k)}\mathbf
p^{(k)}.
\end{equation}
By \eqref{equ:moaff} for $i=1$, $\mathbf Z^{(k+2)}\mathbf
Z^{(k)}\mathbf p^{(k)}  \leq \mathbf Z^{(k+1)}\mathbf Z^{(k)}\mathbf
p^{(k)}$. Along with \eqref{equ:moasdf}, we have 
\begin{equation*}
\mathbf Z^{(k+2)}\mathbf Z^{(k+1)}\mathbf p^{(k)}  \leq \mathbf Z^{(k+1)}\mathbf Z^{(k)}\mathbf
p^{(k)}.
\end{equation*}
By multiplying $\mathbf Z^{(k+3)}$ on both side of the above inequality, we have $\mathbf Z^{(k+3)}\mathbf Z^{(k+2)}\mathbf Z^{(k+1)}\mathbf p^{(k)}  \leq \mathbf Z^{(k+3)}\mathbf Z^{(k+1)}\mathbf Z^{(k)}\mathbf
p^{(k)}$. By \eqref{equ:moaff} for $i=2$, $\mathbf Z^{(k+3)}\mathbf Z^{(k+1)}\mathbf
Z^{(k)}\mathbf p^{(k)}  \leq \mathbf Z^{(k+2)}\mathbf Z^{(k+1)}\mathbf Z^{(k)}\mathbf
p^{(k)}$. So
\begin{equation*}
\mathbf Z^{(k+3)}\mathbf Z^{(k+2)}\mathbf Z^{(k+1)}\mathbf p^{(k)}  \leq \mathbf Z^{(k+2)}\mathbf Z^{(k+1)}\mathbf Z^{(k)}\mathbf
p^{(k)}.
\end{equation*}
By repeating this procedure for $n-1$ times, we can finally get
\begin{equation}\label{equ:vubii}
\mathbf Z^{(k+n)}\mathbf Z^{(k+n-1)}\cdots\mathbf Z^{(k+1)}\mathbf p^{(k)}\leq \mathbf Z^{(k+n-1)}\mathbf Z^{(k+n-2)}\cdots\mathbf Z^{(k+1)}\mathbf Z^{(k)}\mathbf p^{(k)}.
\end{equation}
Since $\Theta(n)\mathbf Z^{*}\mathbf
p^{(k)}< \Theta(n)\frac{1}{\beta^*}\mathbf p^{(k)}$ holds for
arbitrary $\Theta(n)$, we let $\Theta(n) =\mathbf Z^{(k+n)}\mathbf
Z^{(k+n-1)}\cdots\mathbf Z^{(k+1)} $. Along with \eqref{equ:vubii}, we have
\begin{align*}
 \mathbf Z^{(k+n)}\mathbf
Z^{(k+n-1)}\cdots\mathbf Z^{(k+1)}\mathbf Z^{*}\mathbf
p^{(k)}&= \Theta(n)\mathbf Z^{*}\mathbf
p^{(k)} \\
&< \Theta(n)\frac{1}{\beta^*}\mathbf p^{(k)}\\
&= \frac{1}{\beta^*}\mathbf Z^{(k+n)}\mathbf
Z^{(k+n-1)}\cdots\mathbf Z^{(k+1)}\mathbf p^{(k)}\\
&\leq \frac{1}{\beta^*}\mathbf Z^{(k+n-1)}\mathbf Z^{(k+n-2)}\cdots\mathbf Z^{(k+1)}\mathbf Z^{*}\mathbf p^{(k)}.
\end{align*}
By multiplying $\frac{1 }{||\mathbf
Z^{(k+n-1)}\cdots\mathbf Z^{(k+1)}\mathbf Z^{*}\mathbf
p^{(k)}||} $ on both side of the inequality, we have 
\begin{align*}
\mathbf Z^{(k+n)}\mathbf p^{(k+n)} &= \mathbf Z^{(k+n)}\frac{\mathbf
Z^{(k+n-1)}\cdots\mathbf Z^{(k+1)}\mathbf Z^{*}\mathbf
p^{(k)}}{||\mathbf
Z^{(k+n-1)}\cdots\mathbf Z^{(k+1)}\mathbf Z^{*}\mathbf
p^{(k)}||} \\
& <\frac{1}{\beta^*} \frac{\mathbf
Z^{(k+n-1)}\cdots\mathbf Z^{(k+1)}\mathbf Z^{*}\mathbf
p^{(k)}}{||\mathbf
Z^{(k+n-1)}\cdots\mathbf Z^{(k+1)}\mathbf Z^{*}\mathbf
p^{(k)}||} \\
&=\frac{1}{\beta^*}\mathbf p^{(k+n)}.
\end{align*}
This implies $\beta^{(k+n)}>\beta^{*}$, which contradicts with that
$\beta^{*}$ is the limit. Hence there must have $\mathbf Z^{*}\mathbf p^{*}=\frac{1}{\beta^*}\mathbf p^{*}$.

We prove $\mathbf p^{(k)} \rightarrow \mathbf p^*$ by
contradiction. Suppose there exists another subsequence such that $\mathbf
p^{k_j'}\rightarrow \mathbf p'$ and $\mathbf p^*\neq \mathbf
p'$. Then $\mathbf Z^*\mathbf
p' \leq \frac{1}{\beta^*}\mathbf p'$. Meanwhile we already
have $\mathbf Z^{*}\mathbf p^{*} = \frac{1}{\beta^*}\mathbf p^{*}$. By the Subinvariance
Theorem in \cite{92048} (pp. 23), $\mathbf p^*=\mathbf p'$, which
contradicts with the assumption that $\mathbf p^*\neq \mathbf
p'$. Therefore $\mathbf p^{(k)}$ converges to $\mathbf p^*$.

Since $\mathbf Z^{*}\mathbf p^{*} = \frac{1}{\beta^*}\mathbf p^{*}$
and $\mathbf Z\mathbf p^{*} \leq \frac{1}{\beta^*}\mathbf p^{*}$ for
all $\mathbf Z\in \mathcal Z(\boldsymbol\mu)$, it
is ready to see that $\lim_{\alpha\rightarrow
  \infty} \gamma_i(\alpha \mathbf p^*) = \min_{\mathbf Z\in \mathcal
  Z(\mathbf 1)} \{\frac{p^*_i}{[\mathbf Z\mathbf p^*]_i}
\}=\beta^*\mu_i$ for all $i=1,\ldots,N$.  So $\lim_{\alpha\rightarrow
  \infty} \Gamma(\alpha \mathbf p^*) = \beta^*\boldsymbol\mu$.
\end{proof}


\end{document}